\newtheorem{theorem}{Theorem}
\newtheorem{cor}{Corollary}
\newtheorem{lemma}{Lemma}
\newtheorem{nc}{Necessary Condition}
\newtheorem{nc'}{Necessary Condition}
\title{Partitions of planar point sets into polygons}
\author{Ajit Arvind Diwan \thanks{Department of Computer Science and Engineering,
        Indian Institute of Technology Bombay, {\tt aad@cse.iitb.ac.in}}
        \and
        Bodhayan Roy \thanks{Department of Computer Science and Engineering, Indian Institute of Technology Bombay, {\tt  broy@cse.iitb.ac.in}}}
\begin{document}
\thispagestyle{empty}
\maketitle
\begin{abstract}
In this paper, we characterize planar point sets that can be partitioned into disjoint polygons of arbitrarily specified sizes.
We provide an algorithm to construct such a partition, if it exists,
in polynomial time.
We show that this problem is equivalent to finding a specified $2$-factor in the visibility graph of the point set.
The characterization for the case where all cycles have length $3$ also translates to finding 
a $K_3$-factor of the visibility graph of the point set.
We show that the generalized problem of finding a $K_k$-factor
of the visibility graph 
of a given point set for $k \geq 5$ is NP-hard.
\end{abstract}
\section{Introduction}
Partitioning of point sets is a well studied topic in Computational Geometry.
Let $P$ be a set of finite number of points in the plane. 
A partition of $P$ is called a \emph{convex partition} if $P$ is partitioned into $j$ subsets
$S_1, S_2, \ldots, S_j$ such that all the points of $S_i$ form the vertices 
of a convex polygon \cite{conv-dec}. Problems concerning such partitions
have been studied, and 
bounds on the number of sets required for such a disjoint partition have been established \cite{conv-dec, urabe-1996, aichh-2007}.
In this paper, 
we study the following two related partitions of $P$, and the equivalent problem on the visibility graph of $P$. 

$\\ \\$
A partition of $P$ into subsets $S_1, S_2, \ldots, S_j$ is said to be a \emph{cycle partition} of $P$,
when the points of each $S_i$ can be joined by straight line segments to form a simple polygon, i.e. 
no $S_i$ has all points collinear.
We say that two points $p_i$ and $p_j$ of $P$ 
are \emph{visible} to each other if the line 
segment $p_ip_j$ does not contain any other point of $P$. In other words,
$p_i$  and $p_j$ are visible to each other if $P \cap  {p_ip_j}=\{p_i, p_j\}$. If two points are not visible, 
they are called  \emph{invisible} to each other.
If a point $p_k \in P$ lies on the segment $p_ip_j$ connecting two points $p_i$ and $p_j$ in $P$, 
we say that $p_k$ blocks the visibility between $p_i$ and $p_j$, and
$p_k$ is called a \emph{blocker} in $P$.
A partition of $P$ into subsets $S_1, S_2, \ldots, S_j$ is said to be a \emph{clique partition} of $P$,
when all the points of each $S_i$ are mutually visible, i.e. 
no $S_i$ has three collinear points, and no two points of $S_i$ are blocked by points from any other $S_l$.

$\\ \\$
The \emph{visibility graph} (also called the \emph{point visibility graph} (denoted as PVG) of $P$ is defined by associating a vertex $v_i$ with each point $p_i$ of $P$ 
and such that $(v_i, v_j)$ is an undirected edge of the PVG if  $p_i$ and $p_j$ are visible to each other. 
Point visibility graphs have been studied in the contexts of 
construction \cite{cgl-pgd-85, Edelsbrunner:1986:CAL}, 
recognition \cite{prob-ghosh, recogpvg-2014, pvg-card, pvg-np-hard}, 
connectivity \cite{viscon-wood-2012},
chromatic number and clique number \cite{kpw-ocnv-2005, p-vgps-2008}. 
 Let $H$ be a connected graph. For a given graph $G$, an $H$-factor of $G$ is a spanning subgraph of $G$ whose components are isomorphic to $H$. 
 Thus, a $C_k$-factor of $G$ is a spanning subgraph of $G$ whose components are isomorphic to the cycle on $k$ vertices.
 Similarly, a \emph{$K_k$-factor} of $G$ is a spanning subgraph of $G$ whose components are isomorphic to the clique on $k$ vertices.
 To decide whether or not a graph $G$ on $kn$ vertices has a $K_k$-factor, is NP-hard for $k \geq 3$ \cite{cai-79}. 

 $\\ \\$
We say that a cycle partition of a point set is \emph{disjoint} when no two of the polygons enclosed by the cycles intersect with respect to 
vertices, edges or area.
In this paper, in Section \ref{secnoncr}, we study disjoint cycle partitions of point sets.
In Section \ref{subsectrifac} we study the special case where all cycles are of length $3$.
We provide a necessary and sufficient condition for this case.
We also show that all point sets that admit any cycle partition into cycles of length $3$, also admit such a disjoint cycle partition.
In Section \ref{subseccyfac} we study the generalized disjoint cycle partitions except the case mentioned above,
We provide a different necessary and sufficient condition for it, thereby completely characterizing all point sets
that admit a disjoint cycle partition.
We also show that this problem is equivalent to finding a specified $2$-factor in the visibility graph of the point set,
and characterize all such PVGs.
In Section \ref{secclfac} we study the problem of clique partitions of point sets.
If all cliques are of size $3$, then this problem is the same as the special case in cycle partition.
We prove that if all cliques are of size $5$ or more, then the problem becomes NP-hard.
Finally, in Section \ref{secconcrem}, we conclude with some remarks and open questions.
\section{Disjoint cycle partitions} \label{secnoncr}
 Let $P$ be a given set of finitely many points in the plane. 
 Denote the 
convex hull of a point set $P$ by $CH(P)$.
 We have the following lemmas on $P$.
 \begin{lemma} \label{lem1}
  Every point set $P$ with even number of points has a non-crossing matching.
 \end{lemma}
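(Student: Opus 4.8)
The plan is to produce a minimum-length perfect matching and show it is automatically non-crossing. Since $P$ has an even number of points, perfect matchings of $P$ exist (pair the points in any way), and there are only finitely many of them, so we may choose a perfect matching $M$ minimizing the total Euclidean length of its segments. I claim such an $M$ cannot have two crossing segments, which proves the lemma.

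The key step is an uncrossing argument via the triangle inequality. Suppose for contradiction that two segments $p_ap_b$ and $p_cp_d$ of $M$ cross, meeting at a point $x$. Replacing these two edges by $p_ap_c$ and $p_bp_d$ again yields a perfect matching of $P$, and
\[
|p_ap_c| + |p_bp_d| \le |p_ax| + |xp_c| + |p_bx| + |xp_d| = |p_ap_b| + |p_cp_d|,
\]
where $|\cdot|$ denotes Euclidean length. When $x$ lies in the relative interior of both segments and the segments cross transversally, the four endpoints $p_a,x,p_c$ (respectively $p_b,x,p_d$) are not collinear, so at least one of the two triangle inequalities is strict. Hence the rematched matching is strictly shorter, contradicting the minimality of $M$.

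The main obstacle is handling the degeneracies the paper explicitly permits, namely collinear points, since three or more collinear points can turn the triangle inequality into an equality or cause two matched segments to overlap along a common subsegment rather than cross transversally. To deal with this I would classify the ways two segments of a minimum matching can intersect: (i) a proper transversal crossing, handled by the strict inequality above; (ii) a shared endpoint, which cannot occur in a matching; and (iii) collinear segments overlapping in a subsegment. Case (iii) is the delicate one, and I would reduce it to a one-dimensional check: for four collinear endpoints in sorted order, pairing the two outermost and the two innermost (or the two consecutive pairs appropriately) strictly decreases the total length, so a minimum matching contains no overlapping collinear pair either. Carrying out this sorted-order computation confirms that no two segments of a minimum-length matching meet at all, so $M$ is non-crossing.

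Finally, I would note an alternative inductive proof that avoids optimality entirely: take a vertex $p$ of $CH(P)$ and rotate a directed line through $p$ past the remaining points one at a time; a parity argument shows there is a partner $q$ for which the line $pq$ leaves an even number of points strictly on each side, so one matches $p$ to $q$ and recurses on the two (smaller, even) sides, whose matchings cannot cross $pq$ or each other. This route trades the analytic uncrossing step for a combinatorial sweep, but the same degeneracy handling (choosing the sweep direction to break ties among collinear points) remains the crux, so I would present the minimum-length argument as the cleaner main proof.
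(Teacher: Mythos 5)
Your proof is correct, but it takes a genuinely different route from the paper. The paper argues by convex-hull peeling: pick a vertex $p_i$ of $CH(P)$, match it to a vertex $p_j$ of $CH(P \setminus \{p_i\})$ adjacent to it, observe that both matched points lie outside $CH(P \setminus \{p_i, p_j\})$, and recurse; non-crossing is automatic because all later segments lie inside that shrinking hull. You instead take a minimum-total-length perfect matching and uncross via the triangle inequality. The paper's approach is constructive and cheap (a sequence of hull computations), which matters since the paper emphasizes polynomial-time construction, and the same peeling idea is reused in Lemma \ref{disjcycl} to separate out cycles; your extremal approach is the standard textbook argument and is arguably cleaner for handling degeneracies, but making it algorithmic requires computing a minimum-weight perfect matching (e.g.\ Edmonds' blossom algorithm), which is heavier than peeling, even though for pure existence the finiteness of the set of matchings suffices.

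Two points in your case analysis deserve repair, though both are fixable by the inequality you already wrote down. First, your classification (i)--(iii) omits T-junctions: two non-collinear segments can meet where an endpoint $p_c$ of one lies in the relative interior of the other segment $p_ap_b$. This is still an intersection and must be excluded; taking $x = p_c$ in your swap, the inequality $|p_bp_d| \leq |p_bp_c| + |p_cp_d|$ is strict because $p_d$ does not lie on the line through $p_a p_b$, so the rematched matching is strictly shorter and the case is ruled out. Second, in case (iii) your primary suggestion (pair the two outermost points together and the two innermost together) does not work: for collinear points $a,b,c,d$ in sorted order, the nested pairing $\{ad, bc\}$ has total length $|ab| + |cd| + 2|bc|$, exactly equal to the overlapping pairing $\{ac, bd\}$, so it gives no strict decrease. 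Only your parenthetical alternative, the consecutive pairing $\{ab, cd\}$ of total length $|ab| + |cd|$, is strictly shorter, and that is the computation that must be carried out.
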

\begin{proof}
 Remove a vertex $p_i$ of $CH(P)$. 
 Remove a vertex $p_j$ of $CH(P \setminus \{p_i\})$ adjacent to $p_i$ and match the two points.
 The two matched points $p_i$ and $p_j$ clearly lie outside of $CH(P \setminus \{p_i \cup p_j\})$.
   Repeat this process to get the perfect matching.
\end{proof}
$\\ \\$
Now we generalize the concept of a non-crossing matching.
A partition of $P$ into subsets $S_1, S_2, \ldots, S_j$ is said to be a \emph{cycle partition} of $P$,
when the points of each $S_i$ can be joined by straight line segments to form a simple polygon, i.e. 
no $S_i$ has all points collinear.
\subsection{A necessary condition for disjoint triangle partition}
A cycle partition of $P$ is said to be a \emph{triangle partition}
when all the cycles have length $3$.
Consider a subset $S$ of $P$ such that no points of $S$ are visible from each other.
Then we call $S$ an \emph{independent set} of $P$.
Observe that $S$ also induces an independent set in the visibility graph of $P$.
We have the following necessary condition for point sets that admit a triangle partition.
\begin{nc} \label{nc1}
Let $P$ be a set of $3n$ points. If $P$ admits a triangle partition then $P$ does not have an independent set of $n+1$ points.
\end{nc}
\begin{proof}
 Assume on the contrary that $P$ admits a triangle partition and also has an independent set $S$ of $n+1$ points.
 The triangle partition of $P$ must have $n$ triangles. 
 So, some two points of $S$ must lie in the same triangle, which is impossible.
\end{proof}
\subsection{Sufficiency of the necessary condition} \label{subsectrifac}
Here we provide a characterization of all 
point sets that admit a triangle partition,
 by proving the sufficiency of Necessary Condition \ref{nc1}.
\begin{figure}[h] 
\begin{center} 
\centerline{\hbox{\psfig{figure=,width=0.50\hsize}}}
\caption{(a) A point set with the points of $\mathcal{I}$ on three different lines on the convex hull. The points of 
$\mathcal{I}$ are coloured red. (b) A pentagon with five vertices from $\mathcal{I}$.}
\label{pentagonfig}
\end{center}
\end{figure} 
\begin{lemma}
The point set $P$
  on $3n$ points can have an independent set of size $n+1$ only if one of the following is true.
  \begin{enumerate}
   \item All the points of the independent set are collinear.
   \item All the points of the independent set lie on two lines on $CH(P)$.
  \end{enumerate}
  \end{lemma}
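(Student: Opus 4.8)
The plan is to read the statement as a lower bound on the number of available blockers. Write $\mathcal{I}$ for the independent set, put $k=n+1$, and let $B=P\setminus\mathcal{I}$ be the set of candidate blockers, so that $|B|=3n-(n+1)=2n-1=2k-3$. Since $\mathcal{I}$ is independent, each of the $\binom{k}{2}$ pairs of points of $\mathcal{I}$ has a point of $P$ strictly between its endpoints, and that point must lie in $B$. If all of $\mathcal{I}$ is collinear we are in case (1), so I assume $\mathcal{I}$ is not collinear and aim to show that spreading it over three or more supporting lines forces $|B|\ge 2k-2$, contradicting $|B|=2k-3$. A preliminary reduction I would make is that the cheapest configurations are those in which $\mathcal{I}$ is in convex position: a point of $\mathcal{I}$ strictly inside $CH(\mathcal{I})$ is surrounded by the others and its $k-1$ incident segments force extra interior blockers, so interior points can only raise the count. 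It therefore suffices to obtain the contradiction when every point of $\mathcal{I}$ lies on $\partial CH(\mathcal{I})$, and the sharpest obstruction is the strictly convex one, i.e.\ no three points of $\mathcal{I}$ collinear.

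The counting then splits into boundary and interior blockers. Listing the $V$ points of $\mathcal{I}$ in cyclic order along $\partial CH(\mathcal{I})$, each of the $V$ cyclically consecutive pairs spans a segment on the hull boundary, needs its own blocker there, and these blockers occupy disjoint open boundary segments, hence are distinct; this yields at least $V$ boundary blockers. The remaining pairs are diagonals, and a blocker $b$ kills only the diagonals through $b$; since $\mathcal{I}$ is in strictly convex position, every line through $b$ meets at most two points of $\mathcal{I}$, so $b$ lies on at most $\lfloor k/2\rfloor$ diagonals. With budget $2k-3$, at most $2k-3-V$ blockers remain for the interior and cover at most $(2k-3-V)\lfloor k/2\rfloor$ diagonals, whereas $\binom{k}{2}-V$ diagonals must be covered. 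Comparing these quantities gives a contradiction once $\mathcal{I}$ uses three or more supporting lines; in particular it cleanly kills the pentagon of Figure~(b) (here $k=5$, $V=5$, and two interior blockers cover at most $4$ of the $5$ diagonals) and, by the same split adapted to collinear edge points, the three-line configuration of Figure~(a). I would note that the inequality is strict for odd $k$, disposing of those cases immediately.

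The main obstacle is the even, centrally symmetric case, where this crude inequality is exactly tight: for a convex $2m$-gon the $m$ long diagonals all pass through the centre, so one interior blocker attains the bound $\lfloor k/2\rfloor$ and a purely enumerative count no longer forbids the configuration. Breaking this tie needs a genuinely geometric argument rather than a counting one. I would show that the full diagonal set of a convex polygon cannot be partitioned into $k-3$ concurrent pencils each of maximum size: after removing one central pencil of long diagonals, the surviving non-central diagonals cannot be gathered into equally large concurrent pencils, since at most two of them are concurrent at any interior point other than the centre. Formalising this non-realisability — equivalently, an incidence bound sharper than the trivial $\lfloor k/2\rfloor$ away from the centre, or an extremal argument that peels one hull vertex at a time — is the technical heart of the proof.

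Once convex polygons with three or more edges are excluded, $\mathcal{I}$ must lie on at most two lines. To finish I would verify that these two lines are edges of $CH(P)$. In the surviving two-line configuration the count is tight, so every point of $B$ is forced to lie strictly between two points of $\mathcal{I}$, either on one of the two lines or on a crossing chord interior to their convex hull; hence the outermost points of $\mathcal{I}$ on each line are extreme points of the whole set $P$, and the two segments they span appear on $\partial CH(P)$. This places the two supporting lines on the convex hull, which is exactly case~(2).
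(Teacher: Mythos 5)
Your proposal contains a genuine gap, and you flag it yourself: the even-$k$ case, where your count $(2k-3-V)\lfloor k/2\rfloor$ exactly meets $\binom{k}{2}-V$, is deferred to an unproved ``non-realisability'' claim that you call the technical heart of the proof. Worse, the specific geometric fact you propose to close it with --- that at most two non-maximal diagonals of a convex polygon can be concurrent at an interior point other than the centre --- is false. Take a regular hexagon $v_1,\dots,v_6$, whose diagonals $v_1v_4$, $v_2v_5$, $v_3v_6$ meet at its centre, and append two further points $v_7,v_8$ on the arc between $v_6$ and $v_1$, far enough out that $v_1,\dots,v_8$ is a convex octagon. In this octagon the three concurrent diagonals $v_1v_4$, $v_2v_5$, $v_3v_6$ are not main diagonals and their common point is not distinguished in any way; so pencils of three or more concurrent non-maximal diagonals do exist, and your tie-breaking argument collapses. (The even case \emph{can} be rescued, but by a different observation: a pencil of $k/2$ vertex-disjoint pairwise-crossing diagonals must be the unique crossing perfect matching $v_iv_{i+k/2}$, so at most one blocker attains $k/2$ and every other attains at most $k/2-1$; your text does not contain this.) There are two further soft spots: the reduction to convex position is an exchange argument you assert but never make (``interior points can only raise the count''), and the further reduction to \emph{strictly} convex position is not a harmless adaptation --- collinear triples inside $\mathcal{I}$ block one another at no cost to the budget $B$, which is exactly why the two exceptional configurations in the statement exist, so the three-line configuration of Figure~\ref{pentagonfig}(a) needs a genuinely separate analysis, not ``the same split adapted to collinear edge points.''

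It is instructive to compare with how the paper dodges both difficulties. For convex position it does not exchange points: it takes a maximal plane graph $G_I$ on $\mathcal{I}$; since edges of a plane graph have pairwise disjoint interiors, each of its at least $3|\mathcal{I}|-h-3$ edges needs its own blocker, and $3(n+1)-h-3\leq 2n-1$ forces $h\geq n+1$, i.e.\ all of $\mathcal{I}$ lies on $CH(\mathcal{I})$. For the concurrency problem it never confronts a general even polygon at all: it peels convex hull vertices belonging to $\mathcal{I}$ one at a time, discarding the associated blockers while preserving the blocking structure, until only five points of $\mathcal{I}$ in strictly convex position with two blockers remain; there the contradiction is immediate, since blocking five pairs with two points would force three chords spanned by only five vertices through a single point. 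Reducing to the odd case $k=5$ is precisely what makes the parity obstruction you ran into disappear; if you want to keep your counting framework, some such even-to-odd reduction (or the uniqueness-of-the-crossing-matching argument above) is the missing ingredient, together with an honest treatment of collinear points of $\mathcal{I}$.
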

  \begin{proof}
Suppose not all the points of the independent set (say, $\mathcal{I}$) of size $n+1$ in $P$ are collinear.
Consider a maximal plane graph $G_I$ with points of $\mathcal{I}$ as its vertices.
%
Since each edge between two vertices of $G_I$ is 
basically a straight line segment between corresponding points in $P$, it must contain
a blocker in $P$ but not in $\mathcal{I}$.
There can be at most $2n-1$ such blockers.
Let $\vert CH(\mathcal{I}) \vert = h$.
Since $\vert \mathcal{I} \vert = n+1$, we get the following relation.
 \begin{eqnarray*}
  & &  2n-1 \geq 3 \vert \mathcal{I} \vert - h - 3 \\
  & \Rightarrow &  2n-1 \geq 3n - h  \\
   & \Rightarrow & h \geq n + 1
\end{eqnarray*}
Hence, all the points of $\mathcal{I}$ must lie on $CH(\mathcal{I})$.
Observe that it also follows from above that all the vertices of $CH(P)$ must also be in $\mathcal{I}$.
Now suppose $CH(P)$ is (i) a polygon with at least five vertices, or (ii) with points of $\mathcal{I}$ lying in 
the interior of at least three of its edges (Figure \ref{pentagonfig}(a)).
This means, either $CH(P)$ has property (i), 
or we can always exclude a vertex $p_i$ from $CH(P)$ such that  $p_i \in \mathcal{I}$ and $CH(P \setminus \{ p_i \})$ has property (i) or (ii).
Let $p_i$ be a vertex of $CH(P)$. Also, let $p_{i-1}$, $p_i$ and $p_{i+1}$ be consecutive points on $\mathcal{I}$ on $CH(P)$.
Exclusion of $p_i$ results in the exclusion of the blockers 
on $p_{i-1}p_i$ and $p_ip_{i+1}$. A blocker on $p_{i-1}p_{i+1}$ is now a new convex hull point, and there are at most $n-3$ blockers in the interior of the
new convex hull, satisfying all blocking relationships. We continue the process till we are left with the remaining points of $\mathcal{I}$ forming 
a pentagonal convex hull (Figure \ref{pentagonfig}(b)). 

$\\ \\$
Now, for the pentagon, we must block five pairs of convex hull vertices with only two blockers. This requires the line segments passing through some three of these 
pairs to coincide at one point. But this is impossible, because there are only five convex hull points constituting the five pairs. Hence, we have a contradiction.
  \end{proof}
\begin{figure}[h] 
\begin{center} 
\centerline{\hbox{\psfig{figure=,width=0.70\hsize}}}
\caption{(a) A point set where $\mathcal{I}$ has $n$ collinear points. (b) A point set where $\mathcal{I}$ has $n+1$ points on two different lines.}
\label{examplefig}
\end{center}
\end{figure}
\begin{theorem}
\label{trndec}
 If a point set $P$ on $3n$ points does not have an independent set of $n+1$ points, then $P$ has a triangle partition.
\end{theorem}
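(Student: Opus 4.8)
The plan is to recast the statement as a factor problem in the visibility graph and then induct on $n$. First I would observe that a triangle partition is exactly a $K_3$-factor of the PVG of $P$: three points are pairwise visible if and only if they are non-collinear and span a genuine triangle (a collinear triple has its middle point blocking the two outer ones), so the parts of a triangle partition are precisely the triangles (the $K_3$'s) of the visibility graph. The hypothesis that $P$ has no independent set of size $n+1$ is simply the statement that the independence number of the PVG is at most $n$. Thus the goal becomes: a PVG on $3n$ points with independence number at most $n$ admits a $K_3$-factor.

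I would prove this by induction on $n$, carrying the invariant ``$3m$ points, independence number at most $m$''. The base case $m=1$ is immediate, since independence number at most $1$ forces the three points to be pairwise visible. For the inductive step I would try to delete a single triangle $T$ of three pairwise visible points so that the remaining $3(m-1)$ points still have independence number at most $m-1$; induction then finishes. If the independence number is already at most $m-1$, deleting any triangle preserves this, and a triangle is guaranteed to exist because the point set cannot be collinear (a collinear set of $3m$ points has independence number at least $\lceil 3m/2 \rceil$); the minimum-area non-degenerate triangle spanned by $P$ then has no blocker on its edges and no point in its interior, hence is a $K_3$.

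The substantive case is when the independence number equals $m$ exactly. Since a triangle shares at most one vertex with any independent set, deleting $T$ lowers the size of a given maximum independent set by at most one, so to force the independence number down to $m-1$ the triangle $T$ must contain a vertex of \emph{every} independent set of size $m$ in $P$. Here the preceding classification Lemma becomes the key tool, but applied to the remainder rather than to $P$ itself: if $P' = P \setminus T$ still carried an independent set $J$ of size $m = (m-1)+1$, then, since $P'$ has $3(m-1)$ points, the Lemma forces $J$ either to be collinear or to lie on two lines of $CH(P')$. Hence it suffices to choose $T$ so that $P'$ has no collinear independent set of size $m$ and no such two-line independent set --- a far more restricted family of obstructions than the set of all maximum independent sets of $P$.

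The main obstacle I anticipate is exactly this choice of $T$. I would anchor $T$ at a vertex of $CH(P)$ that is forced into the structured independent sets (the blocker-counting that proves the classification Lemma shows convex-hull vertices are hard to avoid), together with two further visible points chosen so that removing them breaks every surviving collinear and two-line configuration of size $m$. Verifying that one triangle can simultaneously destroy all these structured independent sets, and handling the degenerate situations where many points are collinear on $CH(P)$ so that $CH(P')$ and its edges shift when $T$ is removed, is the delicate heart of the argument and the step I expect to require the most care.
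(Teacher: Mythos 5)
Your reduction to the visibility graph, the induction skeleton, and the idea of applying the classification lemma to the remainder $P' = P \setminus T$ all match the paper's strategy. But the step you explicitly defer --- ``verifying that one triangle can simultaneously destroy all these structured independent sets \dots is the delicate heart of the argument'' --- is not a detail; it is essentially the entire content of the paper's proof, and your proposed way of closing it (find a single cleverly anchored triangle $T$ whose removal kills every structured obstruction) cannot work in the most important degenerate case. Suppose $P$ has $2n$ (or $2n-1$) of its $3n$ points on one line $\lambda$; this is compatible with the hypothesis $\alpha(P)=n$. Any triangle uses at most two points of $\lambda$, and if it uses fewer than two, the remainder keeps at least $2n-1$ collinear points and hence an independent set of size $n > n-1$. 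So every step of your recursion is forced to take exactly two consecutive points of $\lambda$ plus one outside point, and the remainder then has exactly $2(n-1)$ points on $\lambda$ --- permanently sitting at the extremal threshold $\alpha = m-1$. The recursion never simplifies the configuration, and at each forced step you must additionally rule out new obstructions created by the deletion itself (removing points can only add visibility edges, since blockers disappear; this is also why your claim that $T$ ``must contain a vertex of every independent set of size $m$ in $P$'' is only a sufficient condition, not a necessary one). The paper resolves this case not by choosing a better triangle but by abandoning the recursion: it writes down the whole partition at once, splitting $\lambda$ into consecutive pairs $s_1,\dots,s_n$ and assigning to each pair an off-line point by a maximum-angle rule (first using points above $\lambda$, then below), with an explicit swap to fix the situation where the last segment is a singleton or the two leftover points straddle $\lambda$.

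The second degenerate case --- a surviving size-$m$ independent set on two hull lines $l_1,l_2$ --- is also handled in the paper by a concrete repair you would need to supply: the removed triple $\{p_i,p_j,p_k\}$ is exchanged for three points taken from the opposite ends of $l_1$ and $l_2$, after which the remainder satisfies the necessary condition and the recursion continues. So the correct architecture is: greedy hull-based triangle removal; if it breaks the invariant, invoke the classification lemma and split into (i) the collinear case, finished by a direct global construction, and (ii) the two-line case, fixed by swapping the removed triple. As written, your proposal is an accurate outline of the theorem with the two constructions that constitute its proof still missing.
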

\begin{proof}
Remove a point $p_i$ from the convex hull of $P$. If all other points of $P$ are collinear, then it violates the necessary condition.
So $P \setminus \{ p_i\}$ 
must have a polygonal convex hull. From $CH(P \setminus \{ p_i\})$, remove two more consecutive points facing $p_i$,
such that at least one of them is on the boundary of $CH(P)$. Call these two points $p_j$ and $p_k$. Suppose that 
after the removal of these three points, $P$ on $3(n-1)$ points violates the necessary condition.

$\\ \\$
 First consider the case where $n$ points of $P\setminus \{ p_i, p_j, p_k\}$ form an independent set and are collinear on some line $l$. 
 Since $P$ does not violate 
 the necessary condition, $l$ must have either $2n-1$ or $2n$ points. 
 Divide $l$ into segments $\{s_1, s_2, \ldots, s_n \}$ from left to right, of two consecutive points each, possibly except $s_n$. 
 Starting from the leftmost segment $s_1$, consider the free
 point $p_i$ above $l$ that makes the maximum angle with the rightmost point of $s_1$. Make a triangle with $p_i$ 
 and the two points of $s_1$. Repeat the process for all segments. If points above $l$ are depleted, then 
 use the points below $l$ in an analogous process to complete the partition (Figure \ref{examplefig}(a)). 
 It may so happen that the last segment $s_n$
 has only one point. In this case, use it to form a triangle with two remaining points $p_j$ and $p_k$ outside of $l$. Suppose that 
 the remaining two points are on different sides of $l$. In this case, use the point previously allotted to  
 $s_{n-1}$, and a free point on the same side of $l$, (say, $p_j$) to form the last triangle. 
 Now form the remaining triangle with $p_k$ and $s_{n-1}$.

 $\\ \\$
 Next consider the case where $n$ points of $P\setminus \{ p_i, p_j, p_k\}$ form an independent set $\mathcal{I}$ and are on two 
 lines (say, $l_1$ and $l_2$) on the boundary of $CH(P \setminus \{ p_i, p_j, p_k\})$, 
 violating the necessary condition for $P\setminus \{ p_i, p_j, p_k\}$.
  Suppose that three of $\{ p_i, p_j, p_k\}$ are on $l_1$ or $l_2$. Since $P$ does not violate the necessary condition, 
  none of $\{ p_i, p_j, p_k\}$ can be added to $\mathcal{I}$. However, one of these three points 
  can form another independent set of size $n$ with 
  the blockers between the points of $\mathcal{I}$ on $l_1$ and $l_2$. Observe that $l_1$ and $l_2$ are on the boundary
  of $P$. Wlog let $\{ p_i, p_j, p_k\}$ lie on the right end of $l_1$ and $l_2$. We remove three different points (say, $\{ p_a, p_b, p_c\}$)
  from the left end of $l_1$ and $l_2$, among which one point must be in $\mathcal{I}$. So, $P\setminus \{ p_a, p_b, p_c\}$
  satisfies the necessary condition. Similarly, even if not all three among $\{ p_i, p_j, p_k\}$ are on $l_1$ or $l_2$,
  one end of $l_1$ and $l_2$ each are on the boundary of $P$. As before, remove a point (say, $p_a$) from this end of $l_1$ or $l_2$,
  which must also be on $CH(P)$. Compute $CH(P \setminus \{ p_a\})$ and remove two more points (say, $p_b$ and $p_c$) 
  from the chain of $CH(P \setminus \{ p_a\})$
  facing $p_a$. Since $p_b$ and $p_c$ must also come from $l_1$ and $l_2$, $P \setminus \{ p_a, p_b, p_c\}$ must satisfy the necessary condition.
\end{proof}
$\\ \\$
The above proof is constructive in nature and gives a polynomial algorithm to construct a triangle partition of $P$.
\begin{cor}
 A point visibility graph $G$ of order $3n$ has a $K_3$-factor
if and only if $G$ does not have an independent set
of size $n+1$.
\end{cor}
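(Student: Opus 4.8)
The plan is to read the corollary as the graph-theoretic translation of the geometric results already established, namely Necessary Condition \ref{nc1} and Theorem \ref{trndec}, and then to close the one subtle gap where a ``triangle partition'' and a ``$K_3$-factor'' could in principle differ.

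First I would fix the dictionary between $G$ and its underlying point set $P$. Since $G$ is a PVG on $3n$ vertices, the edges of $G$ are exactly the visible pairs of $P$, so an independent set of $G$ is precisely a set of pairwise invisible points of $P$. A $K_3$-factor of $G$ is a partition of the vertices into $n$ triples each inducing a triangle in $G$, which under the dictionary is a partition of $P$ into $n$ \emph{mutually visible} triples. Thus the corollary is equivalent to the statement that $P$ admits a partition into $n$ mutually visible triples if and only if $P$ has no set of $n+1$ pairwise invisible points. The necessity direction ($K_3$-factor $\Rightarrow$ no independent set of size $n+1$) is then immediate and mirrors Necessary Condition \ref{nc1}: any independent set of $G$ meets each of the $n$ cliques in at most one vertex, hence has size at most $n$.

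For sufficiency I would invoke Theorem \ref{trndec}: the absence of an independent set of size $n+1$ yields a triangle partition of $P$, i.e.\ a partition into $n$ non-collinear triples. The gap to close is that a $K_3$-factor additionally demands that each triple be mutually visible, i.e.\ that no point of $P$ lie on an edge of a triangle. Here I would use that the partition produced by the construction of Theorem \ref{trndec} is disjoint (non-crossing): the triangles are peeled off as convex-hull corners, and in the collinear and two-line cases they are built from consecutive points along a line together with an apex, so their interiors, edges, and vertices are pairwise disjoint. In a disjoint triangle partition no point of $P$ can lie in the relative interior of an edge $ab$ of a triangle $T_1 = abc$, because that point is a vertex of some other triangle $T_2 \neq T_1$, and a vertex of $T_2$ lying on an edge of $T_1$ contradicts disjointness. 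Hence every triple is blocker-free, i.e.\ mutually visible, and the triangle partition is exactly a $K_3$-factor of $G$.

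The main obstacle is precisely this last step: certifying that the non-collinear triples supplied by Theorem \ref{trndec} are genuine cliques of the visibility graph rather than merely valid simple polygons. I expect it to reduce entirely to the disjointness of the constructed partition, which forbids an edge of one triangle from passing through a vertex of another; once disjointness is in hand, mutual visibility of each triple, and therefore the clique property, follows with no further geometric argument.
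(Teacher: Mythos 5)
Your proposal is correct and follows essentially the same route as the paper: necessity is the counting argument of Necessary Condition \ref{nc1}, and sufficiency invokes the construction of Theorem \ref{trndec}. The gap you identify and close --- that a triangle partition only guarantees non-collinear triples, while a $K_3$-factor needs mutually visible ones --- is exactly the point the paper handles by observing (in the introduction and concluding remarks) that its construction yields a \emph{disjoint} partition, whence no vertex of one triangle can block an edge of another, so your treatment makes explicit what the paper leaves implicit rather than diverging from it.
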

\subsection{Generalized cycle partitions} \label{subseccyfac}
Let $S$ be a given set of cycles $\{ C_1, C_2, \ldots, C_l \}$, not all $C_i$ of length $3$.
Let $L_i$ be the length of $C_i \in S$.
In this section, we show that it is possible to partition a point set $P$ into disjoint cycles of $S$ if and only if $P$ 
does not have $\sum_{i=1}^l  L_i - l +1$ collinear points.
\begin{figure}[h] 
\begin{center} 
\centerline{\hbox{\psfig{figure=,width=0.4\hsize}}}
\caption{(a) A $C_5$ separated out from $P$. (b) A set of $6$ points that cannot be triangle partitioned.}
\label{sepcyclfig}
\end{center}
\end{figure}
\begin{nc'}
 If $P$ on $\sum ^l _{i=1} L_i$ points can be partitioned into disjoint cycles of lengths $\{L_1, L_2, \ldots, L_l \}$,
 then at most $\sum ^l _{i=1} L_i- l$ points of $P$ are collinear.
\end{nc'}
\begin{proof}
 Each cycle $C_i$ can have at most $L_i-1$ collinear points. 
 So, if $P$ has more than $\sum ^l _{i=1} L_i$ collinear points, then some $C_i$  must contain all $L_i$ collinear points, which is impossible.
\end{proof}
\begin{lemma} \label{disjcycl}
If $P$ has $\sum_{i=1}^l  L_i$ points, not all collinear, then
 it is possible to separate out  
$C_i$ from $P$ so that $CH(C_i)$ and $CH(P \setminus C_i)$ are disjoint.
%
\end{lemma}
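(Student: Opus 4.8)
The plan is to reduce the lemma to a purely geometric separation statement: it suffices to exhibit $L_i$ points of $P$ that are \emph{strictly linearly separable} from the remaining $|P| - L_i$ points and that are \emph{not all collinear}. Strict separation by a line places the two sets in disjoint open half-planes, so their convex hulls lie in disjoint closed half-planes and are therefore disjoint; this gives exactly the conclusion $CH(C_i) \cap CH(P \setminus C_i) = \emptyset$, while the non-collinearity is what permits the $L_i$ separated points to be realised as a simple polygon $C_i$. (When $l = 1$ the complement is empty and the claim is immediate from the hypothesis that $P$ is not all collinear, so I may assume $L_i < |P|$.)

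For the separability I would use a translation sweep. Fix a generic unit direction $u$ (so that no two points of $P$ have equal projection onto $u$) and order the points $q_1, \dots, q_N$ by increasing projection $\langle q_t, u \rangle$. The prefix $S_u := \{q_1, \dots, q_{L_i}\}$ is then separated from its complement by a line perpendicular to $u$ inserted between $q_{L_i}$ and $q_{L_i + 1}$, so $CH(S_u)$ and $CH(P \setminus S_u)$ lie in disjoint half-planes and are disjoint. The disjointness is thus automatic; the only thing that can fail is that the chosen prefix $S_u$ happens to be collinear, in which case $C_i$ could not be drawn as a polygon.

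The main obstacle, and the heart of the argument, is therefore to show that the sweep direction can be chosen so that $S_u$ is not all collinear. I would establish this by rotating $u$ through a full turn and tracking $S_u$. For all but finitely many directions $S_u$ is locally constant, and it changes only by single-element swaps (the $L_i$-th and $(L_i+1)$-th points exchanging order), so two consecutive prefixes share $L_i - 1$ points; since every cycle has length $L_i \geq 3$, they share at least two points. Now suppose for contradiction that $S_u$ is collinear for every generic direction. If a swap turned a set collinear on a line $m$ into a set collinear on a line $m'$, the at least two shared points would lie on both $m$ and $m'$, forcing $m = m'$; hence the line of collinearity is \emph{invariant} throughout the rotation. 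But the single smallest-projection point $q_1(u)$ is always a vertex of $CH(P)$, and as $u$ turns it ranges over \emph{all} vertices of $CH(P)$, of which there are at least three and which are not collinear because $P$ is not all collinear. These vertices would all have to lie on the one invariant line $m$, a contradiction. Consequently some direction $u$ yields a non-collinear prefix $S_u$, and taking $C_i$ to be this prefix completes the proof.
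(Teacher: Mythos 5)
Your route is genuinely different from the paper's: the paper peels points off the convex hull (pick a hull vertex $p$, then collect consecutive vertices of $CH(P \setminus \{p\})$ on the chain facing $p$, repeating until $L_i$ points are gathered, and finally orders them angularly around $p$), whereas you take a prefix of a directional sweep, so that hull disjointness comes for free from a strictly separating line, and all the difficulty is pushed into showing the prefix can be made non-collinear. That reduction is sound. The gap is in the step ``it changes only by single-element swaps \dots so two consecutive prefixes share $L_i-1$ points.'' This holds only when no three points of $P$ are collinear. If $k \geq 3$ points lie on a common line $\ell$, then at the critical direction perpendicular to $\ell$ all $k$ of them have equal projection, and crossing that direction reverses the whole block of $k$ points in the sweep order. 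If this block straddles position $L_i$, the prefix can change by many elements at once; in the extreme case where the prefix lies entirely inside the block (possible when $k \geq 2L_i$), two consecutive generic prefixes can share \emph{no} points at all, and your inference $m = m'$ collapses. This is not a corner case you may genericize away: the lemma exists precisely to handle point sets with many collinear points (it is invoked when $P$ may contain up to $\sum_{i=1}^{l} L_i - l$ collinear points), so the collinear regime is exactly the one that matters.

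The invariance conclusion can be rescued, so the gap is reparable rather than fatal. At a critical direction, each group of equal-projection points is itself collinear, lying on a line perpendicular to the sweep direction; only the group $B \subset \ell$ straddling position $L_i$ can alter the prefix. If $B$ contributes $j \geq 2$ points to the prefix, then the assumed collinearity of the old prefix forces its line $m$ to equal $\ell$ (two points of $B$ lie on both), and likewise the new prefix's line $m'$ equals $\ell$, so $m = m'$ directly; if $j \leq 1$, the change really is a single swap and your shared-points argument applies. Inserting this case analysis makes your proof correct, and the rest (the minimum-projection point sweeping through all hull vertices, which cannot all lie on one line since $P$ is not collinear) is fine. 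Comparing the two finished arguments: the paper's peeling construction is immediately algorithmic and needs no genericity or rotation argument, while yours yields a cleaner separation certificate (an explicit separating line) and isolates the only real obstruction, collinearity of an extreme prefix, as a statement of independent interest.
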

\begin{proof}
Choose any vertex $p$ of $CH(P)$ and let $C = \{ p\}$. Draw two tangents from $p$ to $CH(P \setminus \{ p \})$.   
Start from the left tangent on the chain of $CH(P \setminus \{ p \})$ facing $p$, traverse towards the right tangent.
Select consecutive points of $CH(P \setminus \{ p \})$ and add them to $C$ till $\vert C \vert = L_i$.
If $\vert C \vert < L_i$ even after all this points are added, then again consider a point that was
incident to one of the two tangents as the new $p$, and repeat 
the process till $\vert C \vert = L_i$.
Finally, obtain the edges of the cycle $C_i$ by ordering the points of $C$ in an angular order around $p$, joining each 
point by an edge to their predecessor and successor (Figure \ref{sepcyclfig} (a)).
\end{proof}
\begin{figure}[h]  
\begin{center}
\centerline{\hbox{\psfig{figure=,width=0.70\hsize}}}
\caption{(a) The free points of $B$ are collinear with the remaining point of $\lambda$. 
(b) One point of $\lambda$ is freed to be used in the $l'^{th}$ cycle.}
\label{linecyclfig}
\end{center}
\end{figure}
\begin{lemma} \label{bigline}
Let $S' = \{ C_1, C_2, \ldots, C_{l'} \}$, with some $C_i \neq C_{l'}$ of length more than $3$.
If $P'$ has $\sum _{i=1}^{l'} L_i$ points and there is some line $\lambda$ with at least 
 $\sum _{i=1}^{l'-1} L_i- l'+2$ and at most $\sum _{i=1}^{l'} L_i- l'$ collinear 
 points on it,
then $P$ can be partitioned into disjoint cycles of lengths specified by $l'$.
%
%
\end{lemma}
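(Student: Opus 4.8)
The plan is to construct the partition explicitly, exploiting the fact that a long line $\lambda$ forces most of the cycles to be ``fans'' resting on $\lambda$. Write $N = \sum_{i=1}^{l'} L_i$ and let $m$ denote the number of points of $P'$ lying on $\lambda$, so the hypothesis reads $\sum_{i=1}^{l'-1} L_i - l' + 2 \le m \le N - l'$. Set $r = m - \sum_{i=1}^{l'-1}(L_i - 1)$. Substituting the two bounds on $m$ gives the crucial numerical fact $1 \le r \le L_{l'} - 1$. This means I can afford to let each cycle $C_i$ with $i < l'$ consume exactly $L_i - 1$ of the points on $\lambda$, capped by a single point off $\lambda$, and still have exactly $r$ points of $\lambda$ together with $L_{l'} - r$ points off $\lambda$ left for the last cycle $C_{l'}$. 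The count of off-$\lambda$ points balances, since they total $N - m$ and are used as $l'-1$ caps plus $L_{l'}-r$ points for $C_{l'}$, and $(l'-1) + (L_{l'} - r) = N - m$.

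Next I would carry out the geometric construction. Order the points of $\lambda$ along the line and cut them, from one end, into consecutive blocks $B_1, \ldots, B_{l'-1}$ of sizes $L_1 - 1, \ldots, L_{l'-1} - 1$, leaving a final block of $r$ points for $C_{l'}$. For each $i < l'$ I cap the block $B_i$ with one point off $\lambda$, forming a simple polygon that runs along $B_i$ on $\lambda$ and then up to the apex; its convex hull is the thin triangle spanned by the two extreme points of $B_i$ and the apex. Choosing the apexes so that these triangles are pairwise disjoint and disjoint from the region used by $C_{l'}$ is what makes the polygons non-crossing, and I would argue this by the same peeling idea as in Lemma \ref{disjcycl}, extracting one fan at a time so that the convex hull of each extracted cycle is separated from the convex hull of the remaining points. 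Since $r \ge 1$, the cycle $C_{l'}$ receives at least one point off $\lambda$ alongside its $r$ points on $\lambda$; and whenever $r \ge 2$ it contains two points of $\lambda$ plus an off-$\lambda$ point, so it is automatically non-collinear and hence closes into a simple polygon.

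The one genuinely delicate point, and the step I expect to be the main obstacle, is the boundary case $r = 1$ (that is, $m$ equal to its lower bound), where $C_{l'}$ inherits a single point of $\lambda$ together with $L_{l'} - 1$ points off $\lambda$. If these happen to be collinear (Figure \ref{linecyclfig}(a)), the leftover set is degenerate and no simple polygon $C_{l'}$ exists for that assignment. Here I would invoke the hypothesis that some $C_j$ with $j \ne l'$ has length greater than $3$: I reclaim one point of $\lambda$ from that fan $C_j$ and hand it to $C_{l'}$, compensating $C_j$ with one of the off-$\lambda$ points originally slated for $C_{l'}$ (Figure \ref{linecyclfig}(b)). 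After this swap $C_{l'}$ holds two points of $\lambda$ and is therefore non-collinear, while $C_j$, being long enough to retain at least one point of $\lambda$ after losing one, remains a valid and suitably placed fan. The part requiring the most care is verifying that this swap can always be performed without destroying disjointness and that the relocated off-$\lambda$ point genuinely breaks the collinearity; the length-greater-than-$3$ assumption is exactly what guarantees a cycle from which a point of $\lambda$ can be borrowed. Assembling the fans $C_1, \ldots, C_{l'-1}$ with the (possibly repaired) cycle $C_{l'}$ then yields the desired disjoint cycle partition.
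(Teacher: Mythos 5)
Your overall plan is the same as the paper's: the same bookkeeping (blocks of $L_i-1$ consecutive points of $\lambda$, each capped by one off-line apex, with $1 \le r \le L_{l'}-1$ points of $\lambda$ left over for $C_{l'}$), the same identification of the one degenerate situation ($r=1$ with a collinear leftover), and the same repair by trading a point of $\lambda$ for an off-line point. However, the two steps you explicitly defer are exactly where the content of the proof lies, and the deferrals do not go through as stated. The apexes cannot be chosen ``by the same peeling idea as in Lemma \ref{disjcycl}'': that peeling extracts consecutive vertices of convex hulls, whereas here each fan must consist of a \emph{prescribed} block of $\lambda$-points, and such a block generally lies in the interior of $CH(P')$, so no hull-peeling step produces it. What the paper actually does is fix one side of $\lambda$ (the side $A$ with at most $l'-1$ points, handling the case where both sides are large separately), process the blocks left to right, and choose as cap for the current block $M_i$ the free point $a_v$ such that the triangle over $M_i$ contains no free point and all remaining free points lie to the right of the ray $\overrightarrow{m_i^{L_i-1}a_v}$; the inductive invariant that free points stay on one side of all such rays is what makes the fans pairwise disjoint, and it is also what guarantees that the leftover points (free points of $B$ together with the rightmost points of $\lambda$) have a convex hull containing no other point of $P'$, so that the last cycle can be closed at all. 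Your proposal, which allows caps to be taken from either side in an unspecified way, has no substitute for this invariant.

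The second gap is in the swap. You borrow a point of $\lambda$ from ``that fan $C_j$,'' i.e.\ from an arbitrary cycle of length greater than $3$. If $C_j$'s block is not adjacent to the leftover block, the borrowed point is separated from the rest of $C_{l'}$ by the intervening fans, and the polygon through $C_{l'}$'s points must then cross them, destroying disjointness. The repair only works if the blocks are ordered so that a cycle of length at least $4$ occupies the block next to the remainder (this reordering is the real use of the hypothesis), and the compensating off-line point handed to that fan must itself be chosen by an extremal rotating-ray rule, as the paper does with the point $b_p$ in Figure \ref{linecyclfig}(b), or the modified fan can cross free points or the new $C_{l'}$. A minor but related slip: your justification that $C_j$ ``retains at least one point of $\lambda$'' is not the right invariant, since one $\lambda$-point together with two off-line points can still be collinear; what length at least $4$ buys is that $C_j$ retains \emph{two} points of $\lambda$, which together with any off-line point forces non-collinearity. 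Both gaps are fixable along the paper's lines, but as written the proof is incomplete precisely at its critical steps.
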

\begin{proof} 
 Let $\lambda$ be the line containing between $\sum _{i=1}^{l'-1} L_i- l'+2$ and $\sum _{i=1}^{l'} L_i- l'$
 points of $P$. Let the number of points on $\lambda$ be $x$.
  Then there must be $\sum _{i=1}^{l'} L_i -x$ points of $P$ outside of $l$. 
  Starting from the left, call the set of the first $L_1-1$ points as $M_1$, the next $L_2-1$ points as $M_2$ and so on, calling 
  the set of the last remaining points as $M_{l'}$ , where $1 \geq \vert M_{l'} \vert \geq L_{l'}-1$.
  For each $M_i$, let $m_i^j$ be the $j^{th}$ point of $M_i$ starting from the left.
  Call the set of all points over $\lambda$ as $A$, and the set of all points below $\lambda$ as $B$.
 
 $\\ \\$
  Suppose wlog that $\vert A \vert \leq l'-1$. 
Call the points of $A$ not assigned to any $M_i$ as \emph{free} points. Initially all points of $A$ are free.
Consider the set $A_1$ of all free points $a_u$ such that there is no point of $A$ in the interior of $a_um_1^1m_1^{L_1-1}$.
Consider the point $a_v \in A_1$ such that all other points of $A_1$ lie to the right side of $\overrightarrow{m_1^{L_1-1}a_v}$
and assign $a_v$ to $A_1$. Repeat this process with each $M_i$ and corresponding $A_i$ till none of the points of $A$ are free.
 
 $\\ \\$
Observe that at the end of the $i^{th}$ iteration of the above procedure, all the free points are 
to the right of $\overrightarrow{m_j^{L_j-1}}a_j$,
where $j \leq i$ and $a_j$ is the point of $A$ assigned to $M_j$. This can be seen by induction. The base case is clearly true,
because if $S' = \{ C_1\}$, then $L_1 \geq 4$.
If the claim is true for $i_1$ and not $i$, it means that a point $a_u$ lies to the left of $\overrightarrow{m_i^{L_i-1}}a_i$.
The point $a_u$ cannot lie inside $a_im_i^1m_i^{L_i-1}$ due to the definition of $A_i$. Then some non free point $a_j$ must 
by lying inside $a_im_i^1m_i^{L_i-1}$, for some $j < i$. If this is the case then $a_u$ must lie to the left of $\overrightarrow{m_j^{L_j-1}}a_j$,
which is impossible due to our induction hypothesis. It also follows that no two 
such triangles intersect. 
Since $\vert A \vert \leq l'$, the points of $A$ are depleted after less than $n$ iterations of the procedure. Now repeat an analogous procedure
for $B$ so that the free points of $B$ are on the right side of the triangles. After a total of $l'-1$ iterations, between $1$ and $l'-1$ 
rightmost points are left on $\lambda$. Since the free points of $B$ are also to the right of the other points of $B$, the convex hull of 
the free points of $B$ and the remaining points of $\lambda$ does not include any other point. 
So, either the convex hull of these points is a convex polygon, or all of these points are collinear (Figure \ref{linecyclfig}(a)).
In the first case we form the $l'^{th}$ cycle. In the second case, suppose that the rightmost cycle on $\lambda$ is 
$(b_{l'-2}, m_{l'-1}^1, \ldots, m_{l'-1}^{L_j-1},b_{l'-2})$. Let $b_p$ be the first free point of $B$ that the ray 
$\overrightarrow{m_{l'-1}^{L_j-1}b_{l'-2}}$ intersects first when rotated counterclockwise around $m_{l'-1}^{L_j-1}$.
We replace the cycle $(b_{L_j-1}, m_{l'-1}^1, \ldots, m_{l'-1}^{L_j-1},b_{l'-2})$ with $(b_{l'-2}, m_{l'-1}^1, \ldots, m_{l'-1}^{L_j-2}, b_p, b_{l'-2})$.
This releases the point $m_{l'-1}^{L_j-1}$, which cannot be collinear with the rest of the free points. So we now form the $l'^{th}$ cycle
(Figure \ref{linecyclfig}(b)).
 
 $\\ \\$
Now suppose that $\vert A \vert > l'-1$ and  $\vert B \vert > l'-1$. This actually means that $L_j>l'$. We implement the above procedure
on $A$, and it creates $l'-1$ cycles. The $l'^{th}$ cycle is to be constituted from all the points of $B$ and
the remaining points of $A$ and $\lambda$.
Observe that since all the free points of $A$ and $B$ are visible from all the remaining points of $l$, if $l$ has at least two remaining points
we can naturally construct the cycle. If $l$ has only one remaining point, then let $a_y$ be the first free point of $A$ encountered by rotating 
the ray $m_n^{L_j-1}a_n$ in an anticlockwise order around $m_n^{L_j-1}$. Replace the cycle 
$(a_{l'-1}, m_{l'-1}^1, m_{l'-1}^2, \ldots, m_{l'-1}^{L_j-2}, m_{l'-1}^{L_j-1}, a_{l'-1})$
by $(a_{l'-1}, m_{l'-1}^1, m_{l'-1}^2, \ldots, m_{l'-1}^{L_j-2}, a_y, a_{l'-1})$. Now the point $m_{l'-1}^{L_j-1}$ is a free point on $\lambda$, 
making the total number of free 
points on $\lambda$ at least $2$. We construct the $l'^{th}$ cycle as before.
\end{proof}
\begin{cor}
 If $P$ has $\sum ^l _{i=1} L_i$ points, not all $L_i =3$, 
 and there is some line $\lambda$ with exactly $\sum ^l _{i=1} L_i - l$ collinear points on it, then $P$ can be partitioned into disjoint cycles.
\end{cor}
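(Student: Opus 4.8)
The plan is to obtain this statement as an immediate consequence of Lemma~\ref{bigline} by taking $l' = l$ and checking that all of its hypotheses hold. The only real work is bookkeeping on the admissible window for the number of collinear points, a relabelling of the cycles, and a direct treatment of the degenerate single-cycle case that the lemma leaves out.

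First I would dispose of the case $l = 1$. Here the hypothesis that not all $L_i = 3$ forces $L_1 \geq 4$, and $\lambda$ carries exactly $L_1 - 1$ points, so exactly one point $q$ of $P$ lies off $\lambda$. Listing the $L_1 - 1$ collinear points in their order along $\lambda$ as a path and closing that path through $q$ yields a simple polygon on all $L_1$ points, that is, the required single-cycle partition. This case is exactly the one that Lemma~\ref{bigline} excludes, since that lemma demands a long cycle in a non-terminal position, so it must be settled by this explicit construction.

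For $l \geq 2$ I would invoke Lemma~\ref{bigline} directly. Because not all $L_i = 3$, at least one cycle has length exceeding $3$; relabel the cycles $C_1, \ldots, C_l$ so that such a long cycle occupies a position other than the last one. Since $l \geq 2$ this is always possible, and it supplies the index $i \neq l$ with $L_i > 3$ that the lemma requires. The line $\lambda$ carries $x = \sum_{i=1}^{l} L_i - l$ points, which meets the upper bound of the lemma with equality; for the lower bound one computes $x - \bigl(\sum_{i=1}^{l-1} L_i - l + 2\bigr) = L_l - 2 \geq 1$, using that every cycle length is at least $3$. Hence $x$ lies in the admissible interval, and Lemma~\ref{bigline} produces the desired disjoint partition. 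Note that the relabelling cannot break this verification, since $L_l \geq 3$ whichever cycle is placed last.

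The only genuinely delicate point is that Lemma~\ref{bigline} insists on a long cycle that is not the terminal one, which is why the $l = 1$ instance cannot be fed into the lemma and must be handled by the construction above. Everything else is routine: confirming that ``exactly $\sum_{i=1}^{l} L_i - l$'' sits at the top of the lemma's window and that the slack on the lower end is governed by $L_l \geq 3$.
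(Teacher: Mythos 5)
Your proposal is correct and matches the paper's intended derivation: the corollary is stated immediately after Lemma~\ref{bigline} with no separate proof, precisely because it is the instance $l'=l$ at the top of the lemma's window, with the lower-bound slack $L_l - 2 \geq 1$ exactly as you compute. Your explicit handling of the $l=1$ case (where the lemma's requirement of a non-terminal long cycle cannot be met, and one closes the collinear path through the unique off-line point) is a small but genuine tightening that the paper glosses over.
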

\begin{lemma}
 The Necessary condition is not sufficient if  $\forall \ 1 \leq i \leq l \ L_i=3$. 
\end{lemma}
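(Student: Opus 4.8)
The plan is to refute sufficiency by producing a single point set for which the Necessary Condition holds but no triangle partition exists; a lone counterexample suffices, since the claim under attack is a sufficiency statement. When every $L_i=3$ we have $\sum_{i=1}^{l}L_i=3n$ and $l=n$, so the Necessary Condition reads: at most $2n$ points of $P$ are collinear. By contrast, Necessary Condition~\ref{nc1} and Theorem~\ref{trndec} together characterize the $3n$-point sets admitting a triangle partition as exactly those with no independent set of size $n+1$. Thus it is enough to build a $3n$-point set that has at most $2n$ collinear points yet still contains an independent set of size $n+1$; by Theorem~\ref{trndec} this set has no triangle partition, while the Necessary Condition is satisfied.

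Concretely I would take the six-point configuration of Figure~\ref{sepcyclfig}(b) (the case $n=2$). Fix a triangle with vertices $a,b,c$ and place one further point in the relative interior of each side, say $d$ on $ab$, $e$ on $bc$, and $f$ on $ca$ (for definiteness, the three edge midpoints). Then $d$ blocks $a$ from $b$, $e$ blocks $b$ from $c$, and $f$ blocks $c$ from $a$, so no two of $a,b,c$ are mutually visible and $\{a,b,c\}$ is an independent set of size $3=n+1$. The only collinear triples are the three sides $\{a,d,b\}$, $\{b,e,c\}$, $\{c,f,a\}$, each of size $3$, and no four of the six points are collinear; hence at most $3\le 4=2n$ points lie on any line, so the Necessary Condition holds. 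Since $\{a,b,c\}$ is an independent set of size $n+1$, Theorem~\ref{trndec} (equivalently Necessary Condition~\ref{nc1}) rules out any triangle partition, establishing the non-sufficiency.

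The verification above is routine once the right gadget is chosen, so the real content, and the main obstacle, is locating the correct configuration, that is, understanding precisely where the collinearity condition and the independent-set condition diverge. The key observation guiding the search is that an independent set of size $n+1$ cannot lie on a single line: $n+1$ pairwise invisible collinear points would require a blocker strictly between each consecutive pair, forcing at least $2n+1$ points onto that line and thereby violating the Necessary Condition. Hence any counterexample must spread its independent set over at least two lines, and the triangle-with-interior-points gadget is the minimal realization of this, with each side carrying its own blocker so that no line ever reaches $2n+1$ points. The same two-line principle can be scaled up to furnish counterexamples for larger $n$, but the six-point instance already suffices to prove that the Necessary Condition is not sufficient in the all-triangles case.
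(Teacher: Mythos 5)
Your proof is correct, and it uses what is almost certainly the same six-point configuration as the paper's Figure~\ref{sepcyclfig}(b) (a triangle with one blocker in the interior of each side), but the verification is done by a genuinely different route. The paper's proof is a direct case check: it asserts that removing any three mutually visible points from the configuration leaves either three collinear points or three mutually invisible points, and leaves this to inspection (``it can be seen that''). You instead invoke the independent-set obstruction: $\{a,b,c\}$ is an independent set of size $n+1=3$, so Necessary Condition~\ref{nc1} rules out any triangle partition, while no line carries more than $3\le 2n$ points, so the collinearity condition holds. This buys two things over the paper's argument: the non-partitionability is certified by an already-proved lemma rather than an unverified exhaustive claim, and, as you note, the same two-line principle scales to counterexamples for every $n$, whereas the paper's check is tied to the specific six-point picture. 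One small slip: you attribute the impossibility to Theorem~\ref{trndec}, but that theorem is the sufficiency direction (no large independent set implies a partition exists); the implication you actually need (large independent set implies no partition) is exactly Necessary Condition~\ref{nc1}, which you also cite, so the logic stands once the citation is straightened out. Your closing observation, that $n+1$ pairwise invisible points on one line would force $2n+1$ collinear points and hence violate the collinearity condition, is a nice explanation of why any counterexample must use at least two lines, which the paper does not make explicit.
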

\begin{proof}
 Consider the point set in Figure \ref{sepcyclfig} (b). It can be seen that it satisfies the Necessary Condition, but removal of any three
 mutually visible points leaves out either three collinear points or three otherwise mutually invisible points.
\end{proof}
\begin{theorem}
  For any point set $P$ on $\sum_{i=1}^l  L_i$ points, $\exists L_j \geq 4$, the Necessary Condition is sufficient.
\end{theorem}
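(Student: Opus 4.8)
The plan is to induct on the number $l$ of prescribed cycles, carrying the invariant that the current list of lengths always contains at least one value $\ge 4$. Write $N=\sum_{i=1}^l L_i$, let $m$ denote the maximum number of collinear points of $P$, and let $\mu=\min_i L_i$; the Necessary Condition reads $m\le N-l$. The base case $l=1$ is immediate: the single length $L_1\ge 4$ and the Necessary Condition force at most $L_1-1$ collinear points, so $P$ is not collinear and any simple polygonization of $P$ furnishes the cycle.

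For the inductive step I would split on the density of the line realizing $m$. In the \emph{sparse} case $m\le N-l-\mu+1$, the plan is to peel off a \emph{shortest} cycle: by Lemma \ref{disjcycl} I can separate a cycle $C$ of length $\mu$ with $CH(C)$ disjoint from $CH(P\setminus C)$. The residual set $P'=P\setminus C$ has $N-\mu$ points to be covered by $l-1$ cycles, so its Necessary Condition threshold is $(N-\mu)-(l-1)=N-\mu-l+1$, which dominates $m$ and hence the (no larger) maximum collinearity $m'$ of $P'$; thus $P'$ still satisfies the Necessary Condition and is not collinear. Because a \emph{shortest} cycle was removed, a length $\ge 4$ survives in $P'$ (if $\mu=3$ the long cycle is untouched; if $\mu\ge 4$ every remaining length is already $\ge 4$), so the invariant and hypotheses persist and I recurse on $P'$; disjointness of $CH(C)$ from $CH(P')$ guarantees $C$ cannot meet the recursively constructed polygons.

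In the \emph{dense} case $N-l-\mu+2\le m\le N-l$ the plan is to finish in one shot with Lemma \ref{bigline}. Let $\lambda$ carry the $m$ collinear points. I would reorder the lengths so that Lemma \ref{bigline}'s hypothesis of a length $>3$ in a non-terminal slot holds while the terminal length equals $\mu$: if at least two lengths are $\ge 4$, place a shortest length last; if exactly one length is $\ge 4$ then $\mu=3$, so place that long length first and a triangle last. With $L_l=\mu$ the lemma's window becomes $[\,\sum_{i=1}^{l-1}L_i-l+2,\ N-l\,]=[\,N-\mu-l+2,\ N-l\,]$, which contains $m$ by the case assumption; Lemma \ref{bigline} then produces the desired disjoint partition, the extreme value $m=N-l$ being exactly the Corollary. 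Since the two cases are complementary, the induction closes.

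The step I expect to be most delicate is the interface between the two branches. One must choose both the peeled length and the reordering so that the sparse branch retains the Necessary Condition with precisely the $\mu-1$ units of slack it needs, while the dense branch lands \emph{exactly} inside Lemma \ref{bigline}'s window with no gap between the two ranges — and all of this while never stranding the mandatory long cycle among leftover triangles. This last point is the real constraint, and it is exactly why peeling a \emph{shortest} cycle (rather than an arbitrary one) is forced: it simultaneously preserves a length $\ge 4$ for the recursion and keeps the collinearity bookkeeping tight enough to meet Lemma \ref{bigline} at the boundary.
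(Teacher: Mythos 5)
Your proposal is correct and follows essentially the same route as the paper: peel off shortest cycles via Lemma \ref{disjcycl} (ascending order of length, so a cycle of length at least $4$ always survives) while the Necessary Condition retains enough slack, and invoke Lemma \ref{bigline} with the terminal slot holding a shortest cycle exactly when the maximum collinear count enters the critical window. Your explicit two-case split on $m$ is just a proactive restatement of the paper's stopping rule (the paper separates cycles until the next separation would violate the condition, then backtracks one step), with somewhat tidier bookkeeping at the window boundary.
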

\begin{proof} 
Suppose $P$ has no more than $\sum_{i=1}^l L_i-l$ collinear points.
  Using the technique of Lemma \ref{disjcycl}, keep on separating out cycles in the ascending order of their length (so that one 
  of the remaining cycles always has length more than $3$)
  from $P$ till the condition is satisfied. 
  If the condition is satisfied throughout the process then we are done. Otherwise suppose that we are left with $P'$ such that 
  $\vert P' \vert = \sum_{i=1}^{l'-1} L_i$ and at least 
  $ \sum_{i=1}^{l'-1} L_i- l' +1$ points are collinear. Furthermore, assume that this is the first such instance, i.e.
  before separating out the last cycle $C_{l'}$, $P' \cup C_{l'}$ had at most 
  $\sum_{i=1}^{l'} L_i - l'$ collinear points. Then instead of separating $C_{l'}$ out,
  use the technique of Lemma \ref{bigline} to partition $P'$ into disjoint cycles.
\end{proof}
\subsection{2-factors of point visibility graphs}
Here we study the relationship between cycle partition of point sets and 2-factors of their visibility graphs.

\begin{lemma}
 Given a set of cycles $S = \{ C_1, C_2, \ldots, C_l \}$, not all $C_i$ of length $3$, a point visibility graph $G$ admits the corresponding
 2-factor if and only if $G$ does not have 
 any induced path on $(\sum ^ l _ {i =1} L_i) - l + 1$ vertices, where $L_i$ is the length of the cycle $C_i$. 
\end{lemma}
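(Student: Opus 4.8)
The plan is to deduce the statement from the geometric characterization proved above together with one self-contained counting argument; the key realization is that only the \emph{easy} half of the correspondence between collinearity and induced paths is needed. Write $N = \sum_{i=1}^{l} L_i - l + 1$, and note $N = \sum_{i=1}^{l}(L_i-1)+1$. I will prove two implications: (i) if $G$ contains an induced path on $N$ vertices then $G$ admits no $2$-factor with cycles of sizes $L_1, \ldots, L_l$; and (ii) if $G$ contains no such induced path then this $2$-factor exists. Implication (i) I will establish purely combinatorially, which matters because an induced path in a point visibility graph need not arise from collinear points, so attempting to prove a full equivalence ``induced path $\Leftrightarrow$ collinear'' would fail on its converse.

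For (i), suppose toward a contradiction that $G$ has both an induced path $v_1 v_2 \cdots v_N$ and a $2$-factor whose cycles have sizes $L_1, \ldots, L_l$. Let $n_j$ count the path vertices $v_i$ lying in the $j$-th cycle, so $\sum_{j} n_j = N = \sum_{j}(L_j-1)+1$. If every $n_j \le L_j - 1$ then $\sum_j n_j \le \sum_j (L_j-1) = N-1$, a contradiction; hence some cycle $C_{j_0}$ has $n_{j_0} = L_{j_0}$, i.e.\ all of its vertices are path vertices. The induced subgraph of $G$ on any subset of $\{v_1, \ldots, v_N\}$ is a subgraph of a path and hence a linear forest, so on these $L_{j_0}$ vertices it has at most $L_{j_0}-1$ edges; yet the cycle $C_{j_0}$ supplies $L_{j_0}$ edges of $G$ among exactly those vertices. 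This contradiction shows that an induced path on $N$ vertices forbids the $2$-factor, with no geometry used.

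For (ii), I first record that the points of $P$ lying on any single line induce a path in $G$: neighbouring points along the line are mutually visible, whereas two non-neighbouring ones are blocked by a point between them. Consequently, if some line carries at least $N$ points of $P$, the resulting induced path has at least $N$ vertices and therefore contains an induced path on exactly $N$ vertices. Contrapositively, if $G$ has no induced path on $N$ vertices then $P$ has at most $N-1 = \sum_{i=1}^l L_i - l$ collinear points. Since some $L_j \ge 4$, the sufficiency theorem established above applies and yields a \emph{disjoint} cycle partition of $P$ into polygons of the prescribed sizes. Finally, such a disjoint partition is exactly the required $2$-factor: because the polygons are pairwise non-intersecting and each is simple, no edge of any polygon contains another point of $P$, so every polygon edge is a visibility edge of $G$, and the polygons form vertex-disjoint cycles of sizes $L_1, \ldots, L_l$ covering all of $G$. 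The point I would check most carefully is this final bridge --- that the non-crossing condition really leaves every polygon edge blocker-free, so that the geometric partition legitimately yields a graph $2$-factor; by contrast the counting step in (i) and the collinear-to-path observation are short, since the substantive geometric content is already packaged in the theorem invoked.
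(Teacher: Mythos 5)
Your proposal is correct and follows essentially the same route as the paper: the forward direction is the same pigeonhole argument (some cycle would have all its vertices on the induced path, impossible since an induced path spans no cycle), and the reverse direction likewise passes from the absence of the induced path to the absence of $\sum_{i=1}^l L_i - l + 1$ collinear points in any realization, then invokes the sufficiency theorem for disjoint cycle partitions and reads the resulting disjoint partition as a $2$-factor. The only difference is that you spell out the details the paper leaves implicit (the linear-forest edge count, collinear points inducing a path, and blocker-freeness of the edges of disjoint simple polygons), all of which are sound.
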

\begin{proof}
 If such a path exists, then one of the cyles must have all its vertices from the induced path, a contradiction. 
 If such a path does not exist, then no point set $P$ corresponding to $G$ can have $(\sum ^ l _ {i =1} L_i) - l + 1$
 collinear points. Thus, $P$ must have a disjoint cycle partition corresponding to $S$, which corresponds to a 2-factor of $G$.
\end{proof}
\begin{cor}
 If a point set $P$ has $l$ points on a line and at most $k$ points outside of it, where $l \geq 2k + 2$, 
 then every point set corresponding to the visibility 
 graph of $P$ has at least $l$ collinear points.
\end{cor}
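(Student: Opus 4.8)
The plan is to derive this directly from the preceding lemma relating $2$-factors to induced paths, together with the characterization theorem for generalized cycle partitions. Write $n=|P|$ and let $j\le k$ be the number of points of $P$ lying off the line, so that $n=l+j$. The $l$ collinear points of $P$ form an induced path on $l$ vertices in the visibility graph $G$ (consecutive points are visible, non-consecutive ones are blocked), and this induced path is a property of $G$ alone, hence is inherited by every point set that realizes $G$.

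First I would choose a set of target cycles whose induced-path threshold is exactly $l$. Concretely, I look for lengths $L_1,\dots,L_m$ with $\sum_i L_i=n$ and $m=n-l+1=j+1$, so that the quantity $\sum_i L_i-m+1$ appearing in the previous lemma equals $l$. Each length must be at least $3$, and so that both the lemma and the characterization theorem (which require $\exists\,L_j\ge 4$) apply, at least one length must be at least $4$. Since $m=j+1$ and $\sum_i L_i=l+j$, such a choice exists as soon as $l+j\ge 3(j+1)+1$, i.e. $l\ge 2j+4$; the hypothesis $l\ge 2k+2\ge 2j+2$ is exactly what creates room for this (with the equality cases handled separately, below). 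Given such lengths, the previous lemma states that $G$ admits the corresponding $2$-factor if and only if $G$ has no induced path on $l$ vertices.

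The conclusion then follows by contraposition. Because $G$ does contain an induced path on $l$ vertices, it admits no such $2$-factor; applying the lemma to an arbitrary realization $P'$ of $G$, the set $P'$ has no disjoint cycle partition into cycles of lengths $L_1,\dots,L_m$. By the characterization theorem this forces $P'$ to have more than $\sum_i L_i-m=l-1$ collinear points, i.e. at least $l$ collinear points, which is precisely the claim.

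The main obstacle is the feasibility of the length choice at the boundary, which is where the constant $2k+2$ is delicate: when $j=k$ and $l\in\{2k+2,2k+3\}$ the count $m=k+1$ either leaves too few points to form $k+1$ cycles of size at least $3$ (the case $n=3k+2$) or forces an all-triangle partition (the case $n=3k+3$), so the hypothesis $\exists\,L_j\ge 4$ of the lemma is unavailable. For these two extremal configurations I expect to argue directly by counting blockers: the at most $k$ off-line points cannot simultaneously block the $\binom{l}{2}-(l-1)$ non-adjacencies of the induced path without collapsing onto the line, so any realization of $G$ must keep the $l$ path-vertices collinear. Making this blocker-counting argument precise, and checking that it dovetails with the factor argument used for all other values of $j$ and $l$, is the step I expect to require the most care.
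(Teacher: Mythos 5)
Your main derivation is indeed the intended one: the paper leaves this corollary unproved, as an immediate application of the preceding $2$-factor lemma together with the sufficiency theorem for generalized cycle partitions, and your argument is correct exactly where that machinery applies. With $j\le k$ off-line points you take $j$ triangles plus one cycle of length $l-2j$, note that the $l$ collinear points of $P$ give an induced path on $l$ vertices in $G$ (a property inherited by every realization), conclude that $G$ has no such $2$-factor, and hence that any realization $P'$ must exceed the collinearity threshold $l-1$. Your feasibility bookkeeping is also right: this needs $l-2j\ge 4$, which follows from $l\ge 2k+2$ whenever $j\le k-1$, and fails precisely when $j=k$ and $l\in\{2k+2,2k+3\}$, i.e. $n=3k+2$ or $n=3k+3$.

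The gap is that for those two boundary families you have no proof, only a stated intention, and the blocker-counting sketch you give would not work as written. The $\binom{l}{2}-(l-1)$ non-adjacencies of the path do not have to be blocked by the at most $k$ off-line points alone: the path vertices themselves can (and in the intended realization do) act as blockers, and distinct non-edges can share a blocker because the segments joining path vertices cross each other; so a count pitting $k$ points against $\binom{l-1}{2}$ pairs establishes nothing, and the ``each edge needs its own blocker'' trick from the paper's independent-set lemma does not transfer, since it relies on the edges forming a plane graph. The natural substitute is to apply that lemma to the alternating vertices $v_1,v_3,v_5,\ldots$ of the path, an independent set of size $k+1$ or $k+2$; but, for instance, when $n=3k+3$ this set has size exactly $k+2=n/3+1$ and the lemma only yields that those $k+2$ points are collinear or lie on two hull lines, well short of the required $l=2k+3$ collinear points, and since the all-triangle case is characterized by independent sets rather than by collinearity, non-existence of a $C_3$-factor gives no collinearity conclusion at all. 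So the boundary cases need a genuinely new geometric argument, not a refinement of the counting you describe. In fairness, this is as much a defect of the paper as of your proposal: with the constant $2k+2$ the corollary is not actually covered by the lemma it is attached to (which requires a cycle of length at least $4$, hence $l\ge 2k+4$ when $j=k$), and the paper offers nothing for the extremal cases. Your write-up is the intended proof where that proof exists, and it honestly flags, but does not close, the remaining hole.
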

\section{Clique partition} \label{secclfac}
The necessary condition for triangle partition is not sufficient for clique partitions where the cliques may have more than 
three points. Consider the graph drawn on a point set in Figure \ref{sggfig}.
Due to its structure, it is called a \emph{slanted grid graph} \cite{pvg-np-hard}. 
A slanted grid graph on $x$ points has a maximum independent size only of $O(\sqrt x)$.
The slanted grid graph in the figure has forty-four points with $k=4$ and $n=11$, 
and a maximum independent set of size five. However, it cannot be partitioned into copies of $K_4$, because $p_1$ and $p_2$ are not
contained in any $K_4$.
adjacent to triangles.
\begin{figure}[h] 
\begin{center}
\centerline{\hbox{\psfig{figure=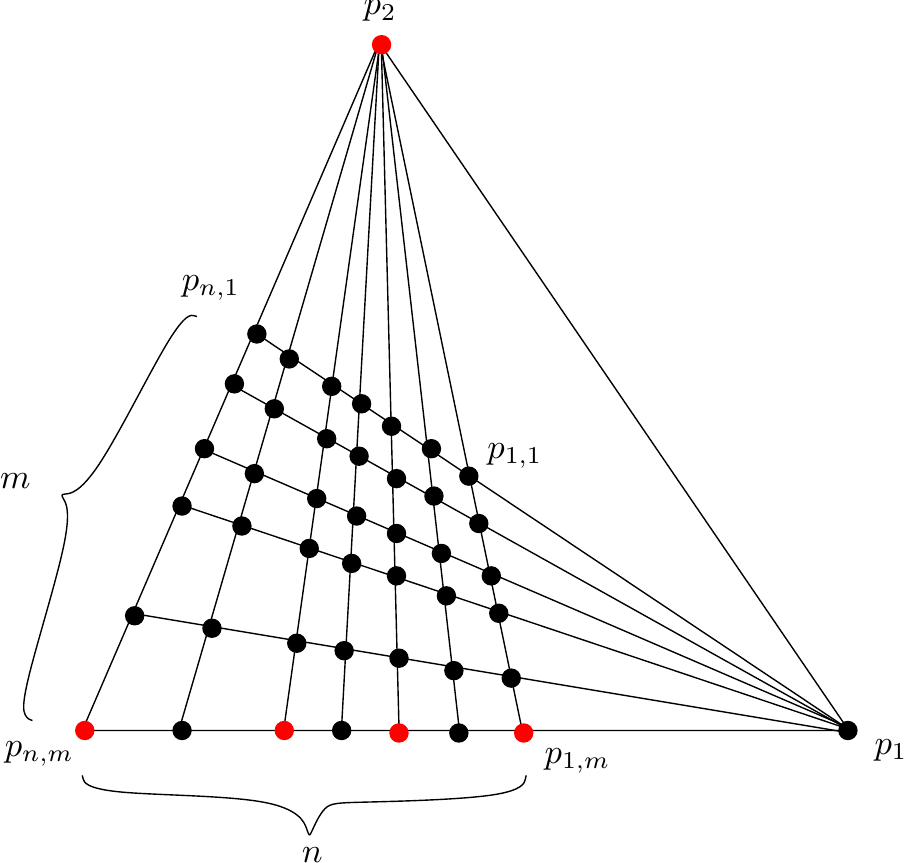,width=0.7\hsize}}}
\caption{A slanted grid graph on forty-four points. A maximum independent set is coloured in red}
\label{sggfig}
\end{center}
\end{figure}

$\\ \\$
In this section, we show that the problem of partitioning a given point set in the plane, into $k$-cliques for $k \geq 5$, is NP-hard. To show this,
we reduce \emph{3-occurrence SAT} to the problem. A 3-occurrence SAT formula is a SAT formula where each variable occurs at most $3$ times.
The 3-occurrence SAT problem is known to be NP-hard \cite{sat-tovey}.
\subsection{Construction of the reduction}
We provide a reduction of 3-occurrence SAT to partitioning a point set into copies of $K_5$.
We start with any given 3-occurrence SAT formula $\theta$, with variables $x_1, x_2, \ldots, x_n$ and clauses
$C_1, C_2, \ldots, C_m$. Wlog we assume that there is no variable in $\theta$ whose positive or negative literals solely 
constitute all of its occurrences. We also assume that each clause $C_i$, $2 \leq i \leq n-1$, has all variables different from those
in $C_{i-1}$ and $C_{i+1}$. This assumption is valid because any given 3-occurrence SAT formula can be transformed to such a formula
by adding a linear number of variables and clauses, with every variable and its negation occuring at least once each.
Let $n_1$ and $n_2$ be the number of variables that occur twice and thrice in $\theta$, respectively, so that $n_1 + n_2 = n$.

$\\ \\$
We now construct a point set $P$ from $\theta$ so that a partition of $P$ into 5-cliques is possible if and only if $\theta$ is satisfiable.
We do the following:
\begin{enumerate} [(a)]

\item
 Let $v =3n_1 + 4n_2 + n-1$, $b_n = 2(m-1)n_1 + (m-2)n_2 + 2(m-1)n_2 + 2(m-2)n_2 = 3mn + mn_2 - 3n - n_1$.
 Let  $e = bn + 2v − m − 1$, $b=e+m-1$ and $c = e + 2m - 2v$.
 
\item
Call the x-axis the \emph{clause line}.
Starting from the origin, from left to right, place $m$ points on the clause-line, unit distance apart. Each such point is called a \emph{clause-point}, the $k^{th}$ 
point representing $C_k$, and denoted as $cp_k$. 

\item
Consider the horizontal line with y-coordinate $-2$ and call it the \emph{extra-line}. 
Starting from the y-axis from left to right, we place $e$ points on the extra-line, each a unit distance apart.

\item
Consider the horizontal line with y-coordinate $-1$ and call it the \emph{blocking-line}. 
Starting from the y-axis from left to right, we place $b$ points on the blocking-line, each half a unit distance apart.

\item
Now consider the horizontal line with y-coordinate $-1.5$ and call it the \emph{variable-line}. 
Let $x'$ be the x-coordinate of the rightmost point on the extra-line. 
Starting from the point $(x'+1,-1.5)$ from left to right we place points on the variable-line, representing the variables of $\theta$.
\begin{enumerate}[(i)]
 \item 
 If $x_i$ and $\bar{x_i}$ occur in $C_j$ and $C_k$ respectively, then we place three points, $x_i^1$, $x_i^2 = \bar {x_i}^1$ and $\bar {x_i}^2$ to the right of
 all points placed so far on the variable-line, in the same consecutive order. We block the points $x_i^1$ and $\bar {x_i}^2$ from all points on the clause-line other than 
 $cp_j$ and $cp_k$ respectively. We block the point $x_i^2$ from all points on the clause-line other than 
 $cp_j$ and $cp_k$.
 
 \item 
 Wlog if there are two occurrences of $x_i$ and one occurrrence of $\bar {x_i}$, then we place four points, $x_i^1$, $x_i^2 = \bar {x_i}^1$, $x_i^3 = \bar {x_i}^2$ 
 and $x_i^4$ to the right of
 all points placeed so far on the variable-line, in the same consecutive order. Suppose that $x_i$ occurs in $C_j$ and $C_k$, and $\bar{x_i}$ occurs in $C_l$.
 We block the points $x_i^1$ and $x_i^4$ from all points on the clause-line other than 
 $cp_j$ and $cp_k$ respectively.
 We block the point $x_i^2$ from all points on the clause-line other than $cp_j$ and $cp_l$.
 We block the point $x_i^3$ from all points on the clause-line other than $cp_k$ and $cp_l$.
\end{enumerate}
\item
On the variable-line, introduce a new \emph{variable-blocker point} after all the points corresponding to a particular variable have been placed. Thus, 
there are $n-1$ variable-blocker points in total. Thus, now there are a total of $v$ points on the variable-line. 
Introduce blockers on the blocking-line so that no clause-point sees any of the variable-blocker points.

\item
Perturb the points on the variable-line slightly so that the corresponding blocking vertex blocks only a single pair of vertices. Thus there
are $b_n$
blockers in total used for the last two steps.

\item
Add $c$
more points to the clause-line, all to the right of the clause-points, such that they see all points not on the clause-line. 
\end{enumerate}
\subsection{Properties of the constructed point set}
We have the following lemmas based on the construction. 
\begin{lemma} \label{lemcompl}
 The construction of the point set can be completed in polynomial time.
\end{lemma}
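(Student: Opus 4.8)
The plan is to verify that every construction step in the reduction requires only polynomially many points and polynomially many arithmetic/geometric operations in the size of the input formula $\theta$. Since the reduction is defined entirely by placing points at explicitly computed rational coordinates and then recording blocking relationships, it suffices to bound the total number of points and to argue that each point's coordinates and each blocker placement can be computed in polynomial time. First I would observe that the input size is governed by $n$ (number of variables) and $m$ (number of clauses), and that after the preprocessing normalization of $\theta$ these remain polynomially bounded.

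Next I would count the points. The construction places points on five horizontal lines: the clause-line (with $m + c$ points), the extra-line ($e$ points), the blocking-line ($b$ points), and the variable-line ($v$ points). Using the definitions $v = 3n_1 + 4n_2 + n - 1$, $b_n = 3mn + mn_2 - 3n - n_1$, together with $e = b_n + 2v - m - 1$, $b = e + m - 1$, and $c = e + 2m - 2v$, each of these counts is a polynomial (indeed linear or bilinear) expression in $n$, $m$, $n_1$, $n_2$, and since $n_1 + n_2 = n$ and $m$ is polynomial in the formula size, the total number $N$ of points is polynomially bounded. Hence the point set has polynomial size, and its coordinates—being integers or half-integers on fixed horizontal lines placed at unit or half-unit spacing—have representations of polynomial bit-length.

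I would then argue that computing the coordinates and the required blocking configurations is polynomial-time. Placing the clause-points, extra-line points, and blocking-line points is a straightforward left-to-right sweep assigning arithmetic-progression coordinates, which is clearly polynomial. For the variable gadgets of types (i) and (ii), each variable contributes a constant number of points whose coordinates are read off from the current rightmost placed point, and for each such point we must install a blocker on the blocking-line that blocks exactly the prescribed pair (the variable-point and every non-target clause-point). The slight perturbation in step (g) ensuring that each blocking-line point blocks a single designated pair is the one genuinely geometric step, but it only requires solving, for each of the $b_n$ blockers, the elementary linear condition that three points be collinear and then choosing a sufficiently small rational perturbation; each such computation is polynomial in the coordinate bit-length.

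The main obstacle I expect is the perturbation/general-position argument in steps (f)–(g): one must confirm that blockers can be placed so that the intended blocking relations hold and \emph{no unintended} collinearities or visibilities are created, all while keeping coordinates of polynomial bit-length. I would handle this by a standard argument: place all points initially at rational coordinates, enumerate the polynomially many triples whose collinearity status must be controlled, and note that the finitely many forbidden alignments are avoided by a perturbation of size at least an inverse-polynomial quantity (so the perturbed coordinates remain polynomial-bit-length rationals). Since there are $O(N^3)$ triples to check and each check is a constant-size determinant evaluation over polynomial-bit-length rationals, the entire verification and perturbation runs in polynomial time, completing the proof.
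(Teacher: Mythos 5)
Your proposal is correct and follows essentially the same route as the paper: bound the number of points and their coordinate bit-lengths polynomially, place the blockers at the intersections of the blocking-line with lines through point pairs, and resolve unintended coincidences or blockings by perturbing the variable-points, observing that there are only polynomially many bad configurations to avoid. The only cosmetic difference is that the paper realizes the perturbation by an explicit pigeonhole discretization (subdividing the variable-line into $4n$ intervals of $mn$ sub-intervals each, so that candidate positions outnumber the $O(mn)$ bad events), whereas you use the continuous small-rational-perturbation version of the same counting argument.
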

\begin{proof} 
 The points placed on lattice points have integer coordinates. The length of these coordinates are only $O(log \ mn)$. The blockers
 are placed between the intersection of the blocking-line and the line passing through two such points. So, the coordinates of the 
 blockers are also of length $O(log \ n)$. After a blocker is placed on the blocking-line, it can coincide with some already placed
 blocker or block one or more pairs of points which are required to be visible. There are $O(mn)$ blockers in total. So
 there are only $O(mn)$ such undesirable positions. We first divide the variable-line into $4n$ intervals. Each of these intervals
 we further divide into $mn$ intervals. Clearly, the coordinates of the endpoints of the intervals are of length $O(log \ mn)$.
 Each of the perturbations can be achieved by assigning these coordinates to the variable-points. Hence the whole construction can
 be achived in $O(mn \ log \ mn)$ time.
\end{proof}
$\\ \\$
For the purpose of proving the later lemmas, we now study a related structure related to our construction.
Consider a partial grid $P_g$ on the lines $y=1$, $y=0$ and $y=-1$. We call these three lines the \emph{top}, \emph{middle} and \emph{bottom}
horizontal lines of $P_g$.
The partial grid starts from the y-axis and lies on the right side of it.
The points of the top and bottom horizontal lines of $P_g$ are only allowed to have nonnegative integer coordinates (Figure \ref{examplefig}(b)). 
For every point with coordinates $(x,y)$ in the top and bottom horizontal lines of $P_g$, where $x \neq 0$, there must also be 
a point with coordinates $(x-1,y)$ in $P_g$.
The points on the middle line of $P_g$ are only allowed to have coordinates of the form $(x,\frac{y}{2})$, where $x$ and $y$ are nonnegative integers. 
For every point on the middle line of $P_g$ with coordinates $(\frac{x}{2},0)$, where $x \neq 0$, there must also be a point with coordinates $(\frac{x-1}{2},0)$. 
Suppose $P_g$ has $p$ points on $y=1$ and $q$ points on $y=-1$. Then we have the following lemma.
\begin{lemma} \label{prtgrd}
 A total of $\lfloor \frac{p+q}{2} \rfloor$ points of $P_g$ on $y=0$ are necessary and sufficient to block all points of $P_g$ on $y=-1$ 
 from all points of $P_g$ on $y=-1$.
\end{lemma}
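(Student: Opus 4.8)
The plan is to translate the blocking requirement into a one–dimensional covering condition on the middle line. The first step is the key geometric observation: a point of $P_g$ on $y=0$ lies on the segment joining a top point $(a,1)$ to a bottom point $(b,-1)$ exactly when it is their midpoint $\left(\tfrac{a+b}{2},\,0\right)$, since that segment meets the line $y=0$ in a single point and passes through no other point of $P_g$ on $y=\pm 1$. Hence a middle point placed at abscissa $\tfrac{s}{2}$ blocks precisely the top–bottom pairs with $a+b=s$. This reduces ``block every top point from every bottom point'' to ``choose an admissible (contiguous–from–origin) set of half–integer abscissae on $y=0$ that realizes every attainable value of $a+b$.''

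For sufficiency I would exhibit such a configuration explicitly and verify, using the contiguity hypotheses on $y=1$ and $y=-1$ (which force the top and bottom points to occupy initial integer segments), that the chosen midpoints hit every attainable sum $a+b$ and that the set of chosen abscissae is itself contiguous from the origin, so that it is a legitimate middle–line configuration of $P_g$; the number of abscissae is then read off directly. For necessity I would argue that distinct values of $a+b$ yield distinct midpoints, so no single middle point can discharge two of them; together with the admissibility (contiguity) constraint this forces the blockers to form an initial half–integer segment of prescribed length, producing a lower bound that matches the sufficiency count.

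I expect the main obstacle to be the tight bookkeeping that produces the floor in the stated bound $\lfloor\tfrac{p+q}{2}\rfloor$: one must pass from the integer spans of the top and bottom point sets to the set of attainable sums $a+b$, intersect this with the half–integer positions actually permitted on the middle line, and resolve the endpoint and parity cases so that the upper and lower bounds coincide. The geometric core—the midpoint characterization and the ``one blocker per value of $a+b$'' principle—is routine; the delicacy lies entirely in making the count exact and in checking that the positions forced by necessity are precisely those allowed by the contiguity condition defining $P_g$.
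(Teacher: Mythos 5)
Your geometric reduction is exactly the one the paper uses: the segment from a top point $(a,1)$ to a bottom point $(b,-1)$ meets $y=0$ in the single point $\left(\frac{a+b}{2},0\right)$ and can contain no other point of $P_g$, so blocking all top--bottom pairs is equivalent to occupying every attainable half-integer abscissa $\frac{a+b}{2}$, and distinct sums require distinct middle points. Up to that point your proposal and the paper's proof coincide.

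The gap is the step you defer, and it is not routine bookkeeping: it cannot be completed to give the stated bound. By the contiguity condition the top points sit at $x=0,1,\dots,p-1$ and the bottom points at $x=0,1,\dots,q-1$, so the attainable sums $a+b$ are exactly $0,1,\dots,p+q-2$, i.e.\ $p+q-1$ distinct values, every one of which (odd or even) corresponds to an admissible half-integer position that must be occupied; contiguity of the middle line then forces precisely the first $p+q-1$ positions. So your plan, carried out honestly, yields $p+q-1$, not $\lfloor\frac{p+q}{2}\rfloor$: already for $p=q=2$ one needs blockers at $0,\frac{1}{2},1$, three points, while the floor gives two. No resolution of ``endpoint and parity cases'' can close this, because parity plays no role when the middle points are half a unit apart; the floor formula is the answer to the different, unit-spacing question in which only pairs with $a+b$ even can be blocked at all --- that is the situation exploited in the $k=6$ part of Theorem \ref{genthm}, and indeed the paper's own $K_5$ construction allocates $b=e+m-1$ blocking points, consistent with $p+q-1$. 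You should also be aware that the paper's proof of Lemma \ref{prtgrd} has the same soft spot: it asserts that the midpoint ``is already in $P_g$'' because $x_1+x_2\le p+q$, which presupposes $p+q-1$ middle points rather than $\lfloor\frac{p+q}{2}\rfloor$. In short, your approach is the right one, but executed to the end it proves the corrected count $p+q-1$ and refutes the bound as printed; leaving the count as a hoped-for outcome is exactly where the proof fails.
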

 \begin{proof}
  Consider points on the top and bottom lines on $P_g$ having coordinates $(x_1,1)$ and $(x_2,-1)$ respectively. The 
  point of intersection of the middle line and the line segment joining these two points is $(\frac{x_1+x_2}{2},0)$.
  Since $x_1 + x_2 \leq p + q$, this point is already in $P_g$. Now let $(x_3,0)$ be the coordinates of a point of $P_g$ on the middle line and wlog let $p \geq q$.
  This point blocks the points of $P_g$ with coordinates $(2x_3,1)$ and $(0,-1)$. 
 \end{proof}
 
 $\\ \\$
 Now we are ready to prove the following main lemmas on our construction.
\begin{lemma} \label{clvis}
  In our construction for the reduction, no clause-point can see any extra-point.
 \end{lemma}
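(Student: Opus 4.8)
The plan is to exploit the fact that the blocking-line $y=-1$ lies exactly halfway between the clause-line $y=0$ and the extra-line $y=-2$, so that the segment joining any clause-point to any extra-point meets the blocking-line precisely at its midpoint; it then suffices to check that a blocking-point always sits at that midpoint. First I would fix coordinates as dictated by the construction: the clause-points occupy $(i,0)$ for $0 \leq i \leq m-1$, the extra-points occupy $(j,-2)$ for $0 \leq j \leq e-1$, and the $b$ densely placed blocking-points occupy $(k/2,-1)$ for $0 \leq k \leq b-1$, where $b = e+m-1$.

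Next I would compute the crossing point. For a clause-point $cp$ at $(i,0)$ and an extra-point $ep$ at $(j,-2)$, the segment $cp\,ep$ crosses $y=-1$ at the point $\left(\frac{i+j}{2},-1\right)$, exactly as in the midpoint observation used in the proof of Lemma \ref{prtgrd}. Since $i$ and $j$ are nonnegative integers, $\frac{i+j}{2}$ is a nonnegative half-integer; and since $i \leq m-1$ and $j \leq e-1$, we have $i+j \leq m+e-2 = b-1$, so $\frac{i+j}{2} \leq \frac{b-1}{2}$. Hence this crossing point coincides with one of the placed blocking-points, which lies strictly between $cp$ and $ep$ on the segment and therefore blocks the visibility between them. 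As this holds for every clause-extra pair, no clause-point sees any extra-point.

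The main thing to verify carefully — and the only real obstacle — is the counting bound $b = e+m-1$: this is precisely the number of distinct midpoints $\frac{i+j}{2}$ that can arise, so the blocking-line is neither too short (the rightmost needed midpoint $\frac{m+e-2}{2}$ is exactly the position of the last blocking-point) nor reliant on any point introduced later. In particular I would note that the later steps of the construction only add variable-blocker points and perturb the variable-line points, leaving the clause-points, extra-points, and the half-integer grid of blocking-points in place; thus the blocking established here is preserved. The argument is essentially the configuration of Lemma \ref{prtgrd} applied to the clause-line, blocking-line, and extra-line after a vertical translation.
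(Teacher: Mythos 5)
Your proof is correct and follows essentially the same route as the paper: the paper's proof simply observes that the clause-points, extra-points, and the half-integer blocking-points form a partial grid and cites Lemma \ref{prtgrd}, whose content is exactly the midpoint computation you carry out explicitly. Your verification that $b = e+m-1$ blocking-points cover all possible midpoints $\frac{i+j}{2}$ is the same counting that underlies that lemma, just inlined rather than cited.
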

\begin{proof}
 The clause-points, extra-points, and the blockers on lattice points of the blocking-line induce a partial grid. By Lemma \ref{prtgrd}, no
 clause-point can see any extra-point.
\end{proof}
 \begin{lemma} \label{oneside}
  $P_c$ can be $K_5$-partitioned if and only if $\theta$ has a satisfying assignment.
 \end{lemma}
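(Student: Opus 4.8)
The plan is to prove both directions by translating truth assignments into explicit $K_5$-partitions and back, using the visibility structure that the construction was engineered to produce. Before either direction, I would first pin down exactly which quintuples of points are mutually visible. The decisive facts are Lemma~\ref{clvis} (no clause-point sees any extra-point) together with the per-variable blocking rules: a clause-point $cp_j$ sees a literal-point $x_i^k$ only when the corresponding literal occurs in $C_j$, while the $c$ rightmost free clause-points see every point off the clause-line. From this I would read off the two admissible kinds of $K_5$: \emph{covering} cliques that contain a clause-point together with a literal-point of one of its literals, and \emph{absorbing} cliques built entirely from extra-points, blockers, literal-points, variable-blocker points and free clause-points, none of which a genuine clause-point $cp_j$ can join.

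For the forward direction (a satisfying assignment yields a partition), I would use the assignment to make, for each variable, the binary choice the gadget was designed to force: if $x_i$ is set true, its positive literal-points are reserved to cover the clause-points of the clauses in which $x_i$ appears positively, and its negative literal-points are placed into absorbing cliques, with the symmetric choice when $x_i$ is false. Because every clause contains a true literal, each clause-point $cp_j$ can be matched to an available literal-point and completed to a covering $K_5$. I would then verify, using the defining identities among $v$, $e$, $b$, $c$ and $b_n$, that the leftover extra-points, blockers and free clause-points are consumed exactly by absorbing cliques with nothing left over, so that all of $P_c$ is partitioned.

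For the reverse direction (a partition yields a satisfying assignment), I would argue that in any $K_5$-partition each clause-point $cp_j$, being invisible to all extra-points by Lemma~\ref{clvis}, must lie in a covering clique and is therefore matched to a literal-point of some literal in $C_j$; declaring that literal true assigns a truth value. The crux is to show the selection is \emph{consistent}, so that no variable is forced both true and false. This would follow from a case analysis of the two variable gadgets (three literal-points when the variable occurs twice, four when it occurs thrice), where the shared middle point $x_i^2=\bar{x_i}^1$ and the surrounding blockers are arranged so that the positive and negative literal-points of one variable cannot all simultaneously serve for covering. Hence the induced assignment is well defined and satisfies every clause.

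The step I expect to be the main obstacle is the consistency argument in the reverse direction. Knowing the local visibility of each gadget is not enough; I must rule out \emph{every} alternative global tiling, and for that the exact counting — that the chosen values of $v$, $e$, $b$, $c$ and $b_n$ leave no slack once the covering cliques are fixed — has to be combined with the gadget-level analysis, so that any inconsistent selection would force either an uncovered clause-point or a literal-point that cannot be absorbed. Making this rigidity precise, rather than merely plausible, is where the real work lies.
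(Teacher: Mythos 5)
Your two directions follow the same route as the paper: assignment-to-partition by explicitly building covering cliques and then checking, via the identities among $v,e,b,c,b_n$, that absorbing cliques exhaust the leftovers; partition-to-assignment by analysing which $K_5$'s can contain a clause-point and reading off literals. The genuine problem is your closing paragraph, which misdiagnoses where the work lies and leaves the proof unfinished on that basis. No global ``rigidity'' or tiling-exclusion argument, and no counting, is needed for the partition-to-assignment direction. Once you establish the local composition fact --- any $K_5$ of the partition containing a clause-point $cp_j$ consists of exactly $cp_j$, two adjacent blocking-points, and the two variable-points representing a \emph{single occurrence} of a literal of $C_j$ (this follows because $cp_j$ sees no extra-point by Lemma~\ref{clvis}; no two clause-points share a $K_5$, since consecutive clauses have disjoint variables and the collinear blocking-line can contribute at most two mutually visible points; and collinearity of the variable-line together with the variable-blocker points forces the variable-points in the clique to be an adjacent pair belonging to one literal occurrence) --- consistency is immediate: if one clause's clique used the pair for $x_j$ and another clause's clique used a pair for $\bar{x_j}$, those two pairs share a physical point by construction ($x_i^2=\bar{x_i}^1$, resp.\ $x_i^2$ or $x_i^3$), contradicting the fact that the cliques of a partition are pairwise disjoint. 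That is the entire argument; the exact values of $v,e,b,c,b_n$ play no role in this direction. They matter only in the converse direction, which your second paragraph already handles as the paper does.

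A related sharpening you need: do not say the clause-point is ``matched to a literal-point'' (singular). If a covering clique could contain just one variable-point, the consistency argument would genuinely fail --- e.g.\ in the three-occurrence gadget, $x_i^1$ (positive) and $x_i^3=\bar{x_i}^2$ (negative) do not coincide, so two cliques picking these single points would induce contradictory truth values with no disjointness violation. It is precisely the forced \emph{pair} structure, with opposite-literal pairs overlapping in a point, that makes the assignment well defined; state and use that composition lemma explicitly and the reverse direction is complete.
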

  \begin{proof}
  Suppose $\xi$ is a $K_5$ partition of $P_c$. 
  By Lemma \ref{clvis}, no clause-point can see any extra-point. 
  So, a clause-point can see only its adjacent clause-points, all the blocking-points,
  and two variable-points for each of the variables that occur in its clause. 
  Suppose that a clause-point is a part of a $K_5$. 
  In the formula $\theta$, no consecutive clauses have the same variables.
  So, its two adjacent clause-points see neither each other,
  nor any of the variable points corresponding to the clause-point.
  Hence, the $K_5$ can contain only (a) the clause-point, (b) only two of the its variable-points corresponding to the
  same literal, since all the variable points are collinear and the points corresponding to each variable are separated by a 
  variable-blocker point.
  and (c) only two blocking-points, since all the blocking-points are collinear.
 
 $\\ \\$
  Given such a $K_5$-partition of $P_c$, the corresponding satisfying assignment of $\theta$ 
  can be constructed as follows. If some clause-point for $C_i$
  takes the variable-points for $x_j$ in its $K_5$, then assign $1$ to $x_j$. If $\bar{x_j}$ does not occur in $\theta$ then we are done for $C_i$.
  Otherwise, since $\theta$ is an instance of 3-occurrence SAT, $\bar{x_j}$ can occur in at most two clauses. But by construction of $P_c$,
  a variable-point from each of the pairs of variable-points representing the occurrences of $\bar{x_j}$, will coincide with the one 
  of the variable-points representing $x_j$ that the clause-point of $C_i$ took. So, no clause-point can include variable-points 
  representing $\bar{x_i}$
  in their $K_5$. An analogus reasoning holds if the clause-point for some $C_i$ takes the variable-point for $\bar{x_i}$. 
  Hence there is no conflict in assigning truth values to variables using the method described above.

$\\ \\$
Now we prove the other direction of the lemma.
 Consider a satisfying assignment of $\theta$. In $P_c$, start with the clause-point of $C_1$. The corresponding $K_5$ will contain the clause-point, the two leftmost
 points on the blocking-line, and the variable-points corresponding to a literal that is assigned $1$ in $C_1 \in \theta$. Similarly, for $C_i$, use 
 the $(2i-1)^{th}$ and $2i^{th}$ leftmost points on the blocking-line, and 
 the variable-points corresponding to a literal that is assigned $1$ in $C_i \in \theta$.
 Now each clause-point is in its respective $K_5$.
 
 $\\ \\$
 Now there are $v - 2m$ and $b+b_n - 2m$ remaining points on the variable-line and blocking-line respectively.
 Form a $K_5$ for each of the remaining points on the variable-line with two consecutive points on the blocking-line and extra-line, always
 choosing the leftmost points available.
 After this is done, $b+b_n - 2m -2(v - 2m)$ and 
 $e - 2m -2(v - 2m)$ points remain available on the blocking-line and extra-line respectively.
 Form a $K_5$ each for the remaining points on the extra-line, with two consecutive available points each from
 the blocking-line and clause-line.
 Due to the values of $c$, $e$, $b$, and $b_n$, all the points of $P_c$ are exhausted.
\end{proof}
 \begin{theorem} \label{genthm}
  For all $k\geq 5$, the $K_k$-partition problem for point sets on the plane is NP-hard.
 \end{theorem}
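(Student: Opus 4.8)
The plan is to prove NP-hardness first for $k=5$ and then bootstrap to every larger $k$ by a clique-padding reduction from the $k=5$ case. For $k=5$ there is essentially nothing left to do: Lemma \ref{lemcompl} shows that the point set $P_c$ built from a $3$-occurrence SAT formula $\theta$ is produced in polynomial time, and Lemmas \ref{clvis} and \ref{oneside} together show that $P_c$ admits a $K_5$-partition if and only if $\theta$ is satisfiable. Since $3$-occurrence SAT is NP-hard, the $K_5$-partition problem for planar point sets is NP-hard, which settles the base case.

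For $k>5$ I would reduce the (now hard) $K_5$-partition problem to the $K_k$-partition problem. Given an instance $P_c$ with $|P_c|=5N$, I would form a new point set $P_c^{(k)}$ by appending $N$ disjoint \emph{gadget groups} of $k-5$ points each, subject to three requirements: (a) the $k-5$ points of each group are in general position and mutually visible, so each group induces a $K_{k-5}$; (b) every gadget point is placed off all of the finitely many lines through pairs of points of $P_c$, so it sees every point of $P_c$ and blocks no pair of $P_c$, leaving the visibility graph of $P_c$ unchanged; and (c) points of distinct groups are pairwise invisible, so that the $N(k-5)$ gadget points induce a disjoint union of $N$ copies of $K_{k-5}$, a graph of clique number exactly $k-5$. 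The total becomes $|P_c^{(k)}|=5N+N(k-5)=Nk$, as required.

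Correctness would then follow from a short counting argument. Any $K_k$-partition of $P_c^{(k)}$ has exactly $N$ cliques. By requirement (c), the gadget points inside a single clique lie in one group and hence number at most $k-5$; since the $N$ cliques must absorb all $N(k-5)$ gadget points, the bound $N(k-5)=\sum(\text{gadget points per clique})\le (k-5)N$ is tight, forcing every clique to contain exactly $k-5$ gadget points, i.e.\ one full group, together with exactly $5$ points of $P_c$. Those five points are mutually visible, so by requirement (b) they form a $K_5$ inside $P_c$; conversely, any $K_5$-partition of $P_c$ extends to a $K_k$-partition by attaching one group to each $K_5$, using (a) and (b). Thus $P_c^{(k)}$ has a $K_k$-partition if and only if $P_c$ has a $K_5$-partition, i.e.\ if and only if $\theta$ is satisfiable.

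I expect the main obstacle to be the \emph{simultaneous} geometric realizability of (b) and (c) with polynomially bounded coordinates. A generic point automatically satisfies (b), so (b) alone is easy; but two generically placed groups also see each other, which directly contradicts (c), and a family of point groups that are pairwise invisible while each still sees all of $P_c$ cannot be obtained by generic placement. I would resolve this with the same screening device used throughout the construction of $P_c$: introduce a dedicated set of collinear blocker points that hide the groups from one another without intersecting any group-to-$P_c$ segment, and fold these blockers into their own $K_k$'s by the tuned bookkeeping already illustrated in Lemma \ref{oneside}. An alternative that stays closer to the original construction is to realize the padding not as isolated groups but as $\lceil (k-5)/2\rceil$ additional parallel visible lines, each contributing two points to every clique, with the point counts re-tuned so that these padding points are forced to complete the SAT-induced $K_5$'s; this avoids pairwise invisibility altogether at the cost of re-deriving the counting identities for general $k$.
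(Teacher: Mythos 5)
Your base case ($k=5$) coincides with the paper's, which indeed just invokes Lemma \ref{oneside}. The gap is in your padding gadget, specifically requirement (c). In a point visibility graph you cannot simply declare two points non-adjacent: every invisible pair must have a blocker point \emph{of the set} lying on its segment. Making $N$ gadget groups pairwise invisible therefore costs extra points, in general $\Theta\bigl(N^2(k-5)^2\bigr)$ of them, and your proposed repair (a dedicated set of \emph{collinear} screening blockers) is geometrically impossible for $k\geq 8$: the $k-5$ points of a group cannot be collinear (they must induce $K_{k-5}$ with $k-5\geq 3$), so the segments joining two distinct groups lie on no common line and cannot all be pierced by points of one line. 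Once blockers enter the point set, your counting argument collapses: the instance no longer has $Nk$ points, the number of cliques exceeds $N$, and tightness (``every clique contains exactly one full group plus $5$ points of $P_c$'') is no longer forced --- a clique could, for instance, consist of a full group together with five blockers, decoupling the $P_c$-points from the groups, and nothing in the proposal controls the visibilities among blockers, gadget points and $P_c$ well enough to restore the conclusion.

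Your fallback --- $\lceil (k-5)/2\rceil$ additional parallel lines, each contributing two points per clique --- is essentially the paper's actual construction, and for \emph{odd} $k=5+2x$ it works by exactly the average-equals-maximum argument you sketch: collinearity caps each line at two points per clique, and the totals force equality, hence exactly five $P_c$-points per clique. But for \emph{even} $k$ the number $k-5$ is odd, so ``two per line'' cannot sum to $k-5$: some line would have to contribute exactly one point to every clique, and collinearity only gives an upper bound of two --- it cannot force a contribution of one. This is precisely where the paper does its real work: for $k=6$ it rebuilds the base construction with a unit-spaced blocking line so that a clause-point sees extra-points of only one parity (hence each $K_6$ can absorb only one extra-point), and re-tunes the parameters $b$, $c$, $e$; larger even $k$ then reduces to this case by adding lines. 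Your phrase ``re-deriving the counting identities for general $k$'' leaves exactly this, the hardest part of the theorem, unproved.
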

 \begin{proof}
 For $k=5$ we have Lemma \ref{oneside}. Suppose that $K$ is a greater odd number $5 + 2x$, then for a given instance of 3-occurrence SAT, first produce $P_c$ 
 for $K_5$-partition. Let $P_c$ have $5y$ points. Parallel to the clause-line and above it, draw $x$ lines of $2y$ points each, such that each
 new point is visible from every other new point and all points of $P_c$. 
 
 $\\ \\$
 Now consider the other case where $K$ is any greater even number $5 + 2x - 1$. First we discuss the case where $k=6$. 
 We assign new values to $b$, $c$ and $e$.
 Let $b= b_n$, $c=2b_n -v$ and $e = 2b_n - m$.
 Modify the construction for $k=5$ by intially placing $b$ points on the blocking-line, each a unit distance apart.
 Observe that, due to the above placement, a clause-point is visible from
 an extra-point if and only if the parity of their x-coordinates is different.
 So, since no clause-point sees to consecutive points on the extra-line, a clause-point can be placed into the same $K_6$
 with only one extra-point. 
  Also, this makes $b = \lfloor \frac{m+e}{2} \rfloor$.
 We ensure that the parity of $e$ and $m$ are the same, so the above relation becomes $b =  \frac{m+e}{2}$.
 As before, whenever possible, we choose the leftmost free points.
 Also, as before, this $K_6$ can contain only two blocking-points and variable-points each.
 After all the clause-points are exhausted thus, the number of available points 
 on the variable-line, blocking-line and extra-line, are $v - 2m$, $b+b_n-2m$ and $e-m$ respectively.
 After this, we place the remaining variable-points into copies of $K_6$, using two blocking-points, two extra-points and one new point from the 
 clause-line. So, now the number of available points 
 on the clause-line, blocking-line and extra-line, are $c-(v - 2m)$, $b+b_n-2m-2(v - 2m)$ and $e-m-2(v - 2m)$ respectively.
 We take two points each from these three lines and form copies of $K_6$.
 This is possible due to the new values of $b$, $c$ and $e$.

$\\ \\$
 If $k$ is any greater even number $5 + 2x - 1 = 6 + 2x - 2$, add $x-1$ new lines as in the case before.
\end{proof}

\section{Concluding Remarks} \label{secconcrem}
We have solved the problem of partitioning point sets into a set of polygons whose sizes are given.
and proved analogous results for their visibility graphs.
For clique partitions, when the sizes of given cliques are at least $5$, we have shown the problem to be NP-hard.
Our triangle-partition method indeed gives the solution 
a clique partition into triangles, but the result for $k \geq 4$ remains unknown.
The related problem of partitioning a point set into \emph{convex} polygons also remains unsolved.
\bibliographystyle{plain}
\bibliography{vis}
\end{document}